% These flags are for the final CSF version
% !!! PLEASE DON'T CHANGE THESE !!! INSTEAD DEFINE YOUR OWN texdirectives.tex !!!
\newif\ifdraft\draftfalse  % draft = comments
\newif\ifanon\anonfalse     % anon = light double-blind reviewing (CSF)
\newif\iffull\fullfalse   % full = includes things that were cut from
                          % conf. proceedings only because of space
                          % might be turned true for a journal submission
\newif\ifbackref\backreffalse % backref option for hyperref
                              % useful for shrinking references
%\newif\ifscr\scrfalse     % scr = should the scratchpad be displayed or not
\newif\ifsooner\soonerfalse
\newif\iflater\laterfalse
% !!! PLEASE DON'T CHANGE THESE !!! INSTEAD DEFINE YOUR OWN texdirectives.tex !!!
\newif\ifieee\ieeefalse
\makeatletter \@input{texdirectives} \makeatother

\documentclass[10pt, conference, compsocconf, letterpaper, times]{IEEEtran}

% removing spurious lines between references

% moving to a more reasonable tt font; the times option also changes
% the tt font to Courier, which looks bad (too thin) and wastes
% a lot of horizontal space

% Hacked in the IEEE style to fix these problems (this gives up some space!):
% - the way the sections are "numbered" (roman, letters) is horrible
% - the way the section headings are (de)emphasized is also horrible
% - the way paragraphs are indented also sucks

\usepackage{etex}
\usepackage{amsmath}
\usepackage{mathtools}
\usepackage{url}
\usepackage{xspace}
\usepackage{multirow}
\usepackage{array}
\usepackage{amsmath,amssymb,amsthm}
\usepackage{stmaryrd}
\SetSymbolFont{stmry}{bold}{U}{stmry}{m}{n} % fixed warning
\usepackage{xcolor}
\usepackage{listings}
\usepackage{paralist}
\usepackage{fancyvrb}
\usepackage{latexsym}
\usepackage{bm}
\usepackage{morefloats}
 % this prevents IEEE style clashing with enumitem
\usepackage[shortlabels,inline]{enumitem}
\usepackage{fancyhdr}
\usepackage[yyyymmdd,hhmmss]{datetime}
\usepackage{graphicx}
\usepackage[noadjust]{cite}
\usepackage{coqed}
  % Sort and compress citations, but still leave them in separate []s

%% Narrow letter spacing -- squeeze space in bibliography
% \usepackage{microtype} % [tracking=true]
% \newcommand\narrowstyle{\SetTracking{encoding=*}{-25}\lsstyle}
% %% Back to normal letter spacing
% \newcommand\normalstyle{\SetTracking{encoding=*}{0}\lsstyle}

%\RequirePackage{pdf14} % Gets rid of some warning
%\pdfminorversion=4 % Generates PDF version 1.4 (Acrobat 5.0)

\usepackage{tikz}
\usetikzlibrary{matrix,arrows,decorations.pathmorphing}

\usepackage{preamble}
\typicallabel{MkKey}
\renewcommand{\andalso}{\quad\;\;}

\pagestyle{plain}

\begin{document}

\title{{\Huge\bf Beyond Good and Evil}
\\[1ex]
\Large\bf Formalizing the Security Guarantees of Compartmentalizing Compilation}

\author{
\ifanon
% could still gain space here
\else
  Yannis Juglaret\textsuperscript{1,2} \quad % yannis.juglaret@inria.fr
  C\u{a}t\u{a}lin Hri\c{t}cu\textsuperscript{1} \quad % catalin.hritcu@gmail.com
  Arthur Azevedo de Amorim\textsuperscript{4} \quad % aarthur@seas.upenn.edu
  Boris Eng\textsuperscript{1,3} \quad % engboris@hotmail.fr
  Benjamin C. Pierce\textsuperscript{4} \\[1ex] % bcpierce@cis.upenn.edu
  \textsuperscript{1}Inria Paris\qquad
  \textsuperscript{2}Universit\'{e} Paris Diderot (Paris 7)\qquad
  \textsuperscript{3}Universit\'{e} Paris 8\qquad
  \textsuperscript{4}University of Pennsylvania
\fi
}

\maketitle

\begin{abstract}
Compartmentalization is good security-engineering practice. By
breaking a large software system into mutually distrustful components
that run with minimal privileges, restricting their interactions to
conform to well-defined interfaces, we can limit the damage caused by
low-level attacks such as control-flow hijacking.  When used to defend
against such attacks, compartmentalization is often implemented
cooperatively by a compiler and a low-level compartmentalization
mechanism. However, the formal guarantees provided by such {\em
  compartmentalizing compilation} have seen surprisingly little
investigation.

We propose a new security property, \emph{secure compartmentalizing
  compilation (SCC)}, that formally characterizes the guarantees
provided by compartmentalizing compilation and clarifies its attacker
model. We reconstruct our property by starting from the
well-established notion of fully abstract compilation, then
identifying and lifting three important limitations that make standard
full abstraction unsuitable for compartmentalization. The connection
to full abstraction allows us to prove SCC by adapting established
proof techniques; we illustrate this with a compiler from a simple
unsafe imperative language with procedures to a compartmentalized
abstract machine.
\end{abstract}

\ifsooner
\bcp{is it truly
the connection to FA that allows us to do this?}\ch{yes? hard to
believe?}\bcp{hard to understand.  but if it's justified later, then OK.}
\fi

% \sloppy

% \begin{IEEEkeywords}
% security;
% dynamic enforcement;
% reference monitors;
% low-level code;
% tagged hardware architecture;
% metadata;
% formal verification;
% refinement;
% machine-checked proofs;
% Coq;
% dynamic sealing;
% compartmentalization;
% isolation;
% least privilege;
% memory safety;
% control-flow integrity
% \end{IEEEkeywords}

% \fussy

\ifsooner
\ch{We're indeed using established proof techniques like trace
  semantics, but my impression is that they do need some
  non-trivial(?) adapting to our setting. If that's indeed the case
  then we should not sell short that adapting work. Yannis?}
\yj{Yes, there is some non-trivial adapting work. I can try to add
  more about this in the intro when my section is finished.}
\ch{For now I've changed ``using [established proof techniques]'' to
    ``adapting'' throughout, but we can make this more explicit in the
    intro and \autoref{sec:instance}.}
\fi

\section{Introduction}
\label{sec:intro}

Computer systems are distressingly insecure.
Visiting a website, opening an email, or serving a client request is often
all it takes to be subjected to a control-hijacking attack.
%
%  that allows an attacker to exercise whatever
% privileges have been granted to some program in essentially arbitrary
% ways\bcp{I tried adding a gloss on CFH attacks, but not sure this is the
%   right place for it, or that I worded it well}\aaa{That sounds a bit
%   too specific for this point}\ch{Tempted to copy this part about
%   privileges a bit later + revert}.
%
These devastating low-level attacks typically exploit memory-safety
vulnerabilities such as buffer overflows, use-after-frees, or double
frees, which are abundant in large software systems.
Various techniques have been
proposed for guaranteeing memory safety~\cite{NagarakatteZMZ09, NagarakatteZMZ10, DeviettiBMZ08,
  Nagarakatte2013, NagarakatteMZ14, NagarakatteMZ15,
  interlocks_ahns2012, micropolicies2015, LowFat2013}, but
the challenges of efficiency~\cite{NagarakatteZMZ09, NagarakatteZMZ10},
precision~\cite{m7}, scalability~\cite{ZitserLL04}, backwards
compatibility~\cite{cheri_asplos2015}, and effective
deployment~\cite{DeviettiBMZ08, Nagarakatte2013, NagarakatteMZ14,
  NagarakatteMZ15, interlocks_ahns2012, micropolicies2015, LowFat2013,
  pump_asplos2015}
have hampered their widespread adoption.

Meanwhile, new mitigation techniques have been proposed to deal with the most
onerous consequences of memory  unsafety---for instance, techniques
aimed at preventing
% for discouraging\ch{quite surprising word in this context, you mean
%   that they don't actually prevent it?}\bcp{The old text said ``attempting
%   to prevent...''  Did that mean that it attempted but did not succeed?}\ch{Yes,
%   that's very often the case. Made this more neutral: aimed at}
control-flow hijacking even in unsafe
settings~\cite{Abadi2005, AbadiBEL09, Erlingsson07, TiceRCCELP14, BurowCBPNLF16}.
Unfortunately, these defenses often underestimate the power of the attackers
they may face~\cite{Erlingsson07, SnowMDDLS13, outofcontrol_ieeesp2014,
  DaviSLM14, EvansLOSROS15, EvansFGOTSSRO15}---if, indeed, they have any
clear model at all of what they are protecting against.
Clarifying the precise security properties and
attacker models of practical mitigation techniques is thus an important
research problem---and a challenging one, since a good model has to capture
not only the defense mechanism itself but also the essential features of the
complex world in which low-level attacks occur.

% \ch{Need to bring the compiler into the picture. We focus on {\em
%     compartmentalizing compilation}, that will usually make use of
%   lower level mechanisms, but that's to some extent secondary.  Also,
%   the low-level memory attacks we are concerned about do not occur at
%   the source level, but much lower, and it's the compiler that takes
%   us there.}

In this paper we focus on the use of
{\em compartmentalization}~\cite{GudkaWACDLMNR15, cheri_oakland2015,
  wedge_nsdi2008} as a strong, practical defense mechanism
against low-level attacks exploiting memory unsafety.
The key idea is to break up a large software system into
mutually distrustful components that run with minimal
privileges and can interact only via well-defined interfaces.
This is not only good software engineering; it also gives strong security
benefits.  In particular, control-hijacking attacks can compromise only
specific components with exploitable vulnerabilities, and thus only give the
attacker direct control over the privileges held by these components.
Also, because compartmentalization can be enforced by more coarse-grained
mechanisms, acceptable efficiency and backwards compatibility are generally
easier to achieve than for techniques enforcing full-blown memory safety.
\ch{They also protect against potentially malicious code. Although
  that's not the main focus here, things like NaCl are made to isolate
  untrusted plugins.}

\iffull
\bcp{Here and in the abstract, do we need to say ``when used this way''?
  I.e., when compartmentalization is not used this way, is it {\em never}
  implemented by compiler/runtime cooperation?}\ch{No clue, I'm just trying
  to focus attention to the use of compartmentalization from this paper.
  All other uses are irrelevant as far as I'm concerned.}%
\bcp{For me, qualifying the statement with ``When used this way'' actually
  focuses attention on the other ways it might be used and implemented.}%
\ch{I'm really only referencing to the first sentence in the previous
  paragraph, tried to make this more explicit.}%
\bcp{I know.  My question stands, but I think this is less important than
  most of the other remaining points, so let's save it for later.}%
\fi
When used as a defense mechanism against memory unsafety,
compartmentalization is often achieved via cooperation
between a compiler and a low-level compartmentalization
mechanism~\cite{KrollSA14, ZengTE13, JuglaretHAPST15, GudkaWACDLMNR15,
  PatrignaniDP16, cheri_oakland2015, cheri_asplos2015}.
In this paper we use {\em compartmentalizing compilation} to refer to
cooperative implementations of this sort.
% (or {\em compartmentalizing compiler})
%
The compiler might, for instance, insert dynamic checks and cleanup
code when switching components and provide information about
components and their interfaces to the low-level compartmentalizing
mechanism, which generally provides at least basic
isolation.
% \bcp{That's a bit vague.}\ch{Just the last part?}\bcp{No,
%   all of it}\ch{This just gives an example of a possible role breakup
%   between the compiler and the low-level compartmentalization
%   mechanism. Without it, it would all be a lot more vague.}
 % There was an ``at least'' there,
 %  since basic isolation is just the minimum such a mechanism could
 %  provide.}\bcp{We can put that back, if you want; I felt it was
 %  unnecessary---making a point that didn't need making here.}
%
Two such low-level compartmentalization technologies are already widely
deployed: process-level privilege separation~\cite{Kilpatrick03,
  GudkaWACDLMNR15, wedge_nsdi2008} (used, \EG by OpenSSH~\cite{ProvosFH03}
and for sandboxing plugins and tabs in modern web browsers~\cite{ReisG09})
and software fault isolation~\cite{sfi_sosp1993} (provided, \EG by
Google Native Client~\cite{YeeSDCMOONF10}); many more
are on the drawing boards~\cite{micropolicies2015, sgx, PatrignaniDP16,
  cheri_oakland2015, cheri_asplos2015}.

% \ch{Next paragraph got a bit of work too, and probably needs some more.}

So what security guarantees does compartmentalizing compilation provide,
and what, exactly, is its attacker model?
%
% \ch{Because the compiler can play a crucial role and the security
%   guarantees are most meaningful when expressed in the source language
%   of the compiler / \IE in terms of the source-level program the
%   developer has actually written ...}
%
\ch{The usual? No longer think it's good but it's standard.}%
A good starting point for addressing these questions is the familiar notion
of {\em fully abstract compilation}~\cite{abadi_protection98,
  PatrignaniASJCP15, AgtenSJP12, abadi_aslr12, JagadeesanPRR11,
  FournetSCDSL13, AbadiPP13, AbadiP13, AhmedB11, AhmedB08, NewBA16}.
A fully abstract compiler toolchain (compiler, linker, loader, and
underlying architecture with its security mechanisms) protects the
interactions between a compiled program and its low-level environment,
allowing programmers to reason
soundly about the behavior of their code when it is placed in an arbitrary
target-language context, by considering only its behavior in arbitrary
source-language contexts.
In particular, if we link the code produced by such a compiler against
arbitrary low-level libraries---perhaps compiled from an unsafe language or
even written directly in assembly---%
%the resulting executable -- sounds too static, omits loader and
% underlying architecture with its security mechanisms from the story
the resulting execution will not be any less
secure than if we had restricted ourselves to library code written in the
same high-level language as the calling program.  

(Why is it useful to
restrict attention to attackers written in a high-level language?  First,
because reasoning about what attackers might do---in particular, what
privileges they might exercise---is easier in a high-level language.  And
second, because by phrasing the property in terms of low- and
high-level programs rather than directly in terms of attacker behaviors,
specific notions of privilege, etc., we can re-use the same property for
many different languages.)
\iflater
\ch{The parenthesis is good, but only motivates why we prefer to
  reason about high-level attackers, as opposed to security guarantees
  for low-level compartmentalization, by starting from fully abstract
  compilation first of all we reason about high-level {\em
    programs}. Tempted to turn this into full-fledged paragraph
  on why full abstraction is a good start (Task \#1).}
\ch{We want high-level reasoning despite low-level compromise/attacks,
  and the compiler is what relates the high and the low level, so we
  can't escape talking about it?}
\bcp{I think the parenthesis (which is now a paragraph by itself) is pretty
  good as it stands.  Refinements to do with compartmentalization belong
  in the next bit of the story.}
\fi

Since full abstraction works by partitioning the world into a program and
its context, one might
expect it to apply to compartmentalized programs as well:
some set of components that are assumed to be subject to control-hijacking
attacks could be
grouped into the ``low-level context,'' while some others that are assumed
to be immune to 
such attacks would constitute the ``high-level program.''  Full
abstraction would then allow us to reason about the possible behaviors
of the whole system using the simplifying assumption that the
attacker's injected behavior for the compromised components can be
expressed in the same high-level language as the good components.
\ifsooner
\bcp{I've tried to expand this a bit to spell out what
  full abstraction would mean for compartmentalization---I hope it is
  reasonably clear.\ch{looks reasonably clear, yes}
  But it leads me to wonder why we (I) ever thought that
  full abstraction would be a useful property in this setting: If some
  compartment is compromised by an attacker, why would I care what
  language the attacker wrote their injected behavior in??  Two things seem
  important: (1) The attacker can only exercise privileges that the
  vulnerable compartment possesses, and (2) The compromised compartment can
  only interact with other components according to the rules of the
  mediating interfaces.  But both of these seem independent of the language
  in which the attacker writes the injected behavior.
  \ch{I don't really see how (2) could ever be language independent. The
    whole notion of interface usually depends on things like types
    and general interaction model (say procedures vs methods vs
    closures or whatever).}
  \ch{My impression is that it's easier to reason about an arbitrary
    {\bf fully defined} high-level context, then about the compilation of
    an undefined context (\IE more or less arbitrary ASM).}
  So full abstraction
  seems like pretty clearly not the right thing.  However, we then need to
  be clear about why these objections don't also apply to our proposed
  definitions -- in other words, why is FA even a good starting point?}
\ch{I'm tempted to say let's think about such philosophical questions
  in peace after the deadline. Even if FA wasn't such a good starting
  point, it's not like we're going to change our starting point in the
  next 4 days. I think what we have is still nice and useful.}
% BCP: Fair enough
\fi
Sadly, this intuition does not withstand closer examination.  Full
abstraction, as previously formulated in the literature, suffers
from three important limitations that make it unsuitable for characterizing
the security guarantees of compartmentalizing compilation.

First, fully abstract compilation assumes that the source language itself is
secure,
so that it makes sense to define target-level security with respect to the
semantics of the source language.
However, compartmentalization is often applied to languages like C and C++,
which do {\em not} have a secure semantics---the C and C++ standards leave
most of the security burden to the programmer by calling out a large number
of {\em undefined behaviors}, including memory-safety violations, that are
assumed never to occur.
Valid compilers for these languages are allowed to generate code that does
literally {\em anything}---in particular, anything a remote
attacker may want---when applied to inputs that lead to undefined behavior.
There is no way to tell, statically, whether or not a program may have
undefined behavior, and compilers do not check for this situation.  (Indeed,
not only do they not check: they aggressively exploit the assumption of no
undefined behaviors to produce the fastest possible code for well-defined
programs, often leading to easily exploitable behaviors when this assumption
is broken.)
The point of compartmentalizing compilation
is to ensure that the potential effects of undefined
behavior are limited to the compromise of the component in which it occurs:
other components can only be influenced by compromised
ones via controlled interactions respecting specified interfaces.

To characterize the security of compartmentalizing
compilation, we therefore need a formal property that can meaningfully
accommodate source languages in which components can be compromised
via undefined behavior.
Full abstraction as conventionally formulated does not fit the bill,
because, in order to preserve equivalences of programs with undefined
behavior, compilers must abandon the aggressive optimizations that are
the reason for allowing undefined behaviors in the first place.
To see this, consider C expressions {\tt buf[42]} and {\tt
  buf[43]} that read at different positions {\em outside} the bounds of a
buffer {\tt buf}.
These two programs are equivalent at the source level: they both
lead to arbitrary behavior.
However, a real C compiler would never compile these expressions to
equivalent code, since this would require runtime checks that many C
programmers would deem too expensive.

Second, fully abstract compilation makes an {\em open world} assumption
about the attacker context. While the context is normally required to
be compatible with the protected program, for instance by respecting
the program's typed interface, the structure and privilege of the context are
unrestricted (the full abstraction definition quantifies over {\em
  arbitrary} low-level contexts).
% \bcp{It seems to me that the
%   real point here is privilege, and all the other things are distractions.
%   (Even the issue of whether two compromised components can communicate
%   directly is really a question of whether their privileges can be
%   combined.  Drawing the distinction in terms of open vs. closed worlds
%   doesn't seem as clear as directly talking about limiting the privilege of
%   the attacker.)}\ch{How about simplifying this here then, but bringing
%   the more detailed description to \autoref{sec:fa-not-enough}? I might
%   have started doing that anyway already}
% CH: tried to strike a middle ground between the two ideas and
%   simplify a bit by not talking about interfaces here. Limiting
%   privilege is important, but so is enforcing a fixed structure.
%   And no, the two are not exactly the same in my opinion.
%
This comes in direct contradiction with the idea of least
privilege%
\ifsooner
\bcp{well, it's not in {\em direct} contradiction with the {\em
    idea}, it's in direct contradiction with the definition.}\ch{how
  does one define least privilege?}\bcp{I was just looking for a simpler way
of saying it}\ch{what definition are you referring to?}\bcp{I shouldn't have
said definition---I should have said it's not in direct contradiction with
the idea, but rather with the thing itself.  But my real complaint was just
that it seemed like an overly roundabout way of saying it.}\fi,
which is crucial to compartmentalization, and which relies on the fact
that even if a component is compromised, it does not immediately get
more privilege.
Compromised components cannot change the basic rules of the
compartmentalization game.
For instance, in this paper we consider a static compartmentalization
setting, in which the breakup of the application into components is
fixed in advance, as are the privileges of each component.
A security property suitable for this setting needs to be restricted
to contexts that conform to a fixed breakup into components with
static privileges.%
\footnote{%
  In a setting where new components can be dynamically created and
  privileges can be exchanged dynamically between components, the
  details of this story will be more complicated; still, we expect any
  secure compartmentalizing compilation property to limit the ability
  of low-level attacker contexts to ``forge'' the privileges of
  existing components.}
%
% We focus here on the case where components and privileges are
% statically determined.
% CH: spurious given the text

\iffull
\bcp{Are we sure that there aren't already refinements of full abstraction
  that deal with this properly (e.g., by Amal, maybe)?  Seems like others
  must have hit this issue before, e.g. for programs with refs.}
\ch{I'm not aware, somebody should double check this. By the way,
  limiting privilege is important, but so is enforcing a fixed
  structure, and I would be very surprised if the reference people
  were doing that. And no, structure and privilege are not the same in
  my opinion. I don't think that being split into 3 chunks of 100KB
  each is a privilege. More details in a comment from \autoref{sec:fa-not-enough}}
\bcp{Let's postpone this for the next draft.}
\fi

Third, because the definition of full abstraction involves applying the
compiler only to a program and not to the untrusted context in which it
runs, a fully abstract compiler may choose to achieve its protection goals
by introducing just a single barrier around the trusted part to protect it
from the untrusted part~\cite{PatrignaniASJCP15, AgtenSJP12, LarmuseauPC15,
  PatrignaniCP13, patrignani_thesis}.
Such compilation schemes force the programmer to commit in advance to a
single compromise scenario, \IE to a single static split of their
application into a ``good'' trusted program and an ``evil'' untrusted
context from which this \iffull trusted\fi program has to be protected.
This is not realistic in the setting of compartmentalizing compilation,
where we generally cannot predict which components may be vulnerable to
compromise by control hijacking attacks, and instead must simultaneously
guard against multiple compromise scenarios.
Compartmentalizing compilers allow us to build more secure applications
that go beyond the blunt trusted/untrusted distinction made by some fully
abstract compilers.
To describe their guarantees accurately, we thus need a new property that
captures the protection obtained by breaking up applications into multiple
mutually distrustful components, each running with least privilege, and that
permits reasoning about multiple scenarios in which different subsets of
these components are compromised.
%
% \bcp{Polished this paragraph.  But I wonder if it's still muddying the point
%   a bit by saying too much.  Perhaps the parenthesized sentences can go.}
% \ch{How about moving them to Section 3?}\bcp{Fine.}

Our main contribution is the definition of such a property,
which we call {\em secure compartmentalizing compilation (SCC)}
(\autoref{sec:sc}).
While similar in many respects to full abstraction, our property
escapes the three limitations discussed above.
First, it
applies to unsafe source languages with
undefined behaviors by introducing a new notion of {\em fully defined}
sets of components.
While undefined behavior is a property of whole programs, full definedness
is compositional.
Intuitively, a set of components is fully defined if they cannot be
{\em blamed}~\cite{FindlerF02prime} for undefined behavior
in any context satisfying fixed interfaces.
Second, SCC makes a {\em closed-world} assumption about compromised
components, enforcing the basic rules of the compartmentalization game
like the fixed division into components and the fixed privileges of
each component, including for instance with which other components it
is allowed to interact.
%
% \bcp{it's not the division we care about so much as the
%   communication topology, right?}\ch{it's both; one can't say the
%   nodes of a graph are not important and one only cares about the
%   edges, since without nodes there are no edges}%
% \bcp{I know.  But the emphasis is in the wrong place---it's whether
%   there is or isn't a communication link between two compromised components
%   with different privileges that matters.}%
% \ch{The communication links are treated as privileges, could
%   make this explicit if needed, but it seems like a detail.
%   I'm generally happy with this as written.}%
% \bcp{The reason I keep insisting on it is that I remember being
%   confused for a long time about why you kept insisting it was
%   important that the attacker be divided into the same set of
%   components as the original program.  I only understood it when I
%   realized that there could be two compromised components that, in
%   the original program, held different privileges that needed to be
%   combined to make something bad happen, but the attacker would not be
%   able to combine them because the original program had no
%   communication path between these components that didn't go through
%   an uncompromised compartment.}
% CH: tried to add this above
%
Third, SCC ensures protection for
multiple, mutually distrustful components; it does not assume we know in
advance which components are going to be compromised (i.e., in the C
setting, which components may contain exploitable undefined behaviors), but
instead explicitly quantifies over all possible compromise scenarios.

Our second contribution is relating SCC
to standard formulations of full abstraction both intuitively and formally
(\autoref{sec:fa-not-enough}).
We start from full abstraction and show how the three
limitations that make it unsuitable in our setting can be lifted
one by one.
This results in two properties we call {\em structured full abstraction} and
{\em separate compilation}, which can be combined and instantiated to obtain
SCC.
\ifsooner
\bcp{I was surprised that this
  didn't say ``which culminates in secure compartmentalization.''  Maybe we
  could say it this way in the introduction and leave discussion of the need
  for the last step, from MDFA to SC, as a detail for the technical
  part.}\ch{This part is still technically unclear; in particular we
  might only have implication not equivalence. We'll have to return to
  it anyway once we figure it out precisely.}
\fi
While our property directly captures the intuition of our
attacker model, reducing it to structured full abstraction is a useful
technical step, since the latter is easier to establish for specific
examples using a variant of existing proof techniques.
% \ch{The end of the last phrase sounds not very good English to me,
%   but I don't know how to fix it.}%
% BCP: Fixed, I hope
%
Moreover, arriving at the same property by two different paths
increases our confidence that we found the right property.

\ifsooner
\ch{Generally, the third contribution was the most work so trying to
  say more about it and why it's interesting. Please review.}
\bcp{It's good.  It would be even better if the three insights were
  rewritten to be grammatically parallel.}
\fi

Our third contribution is establishing the SCC property for a simple
unsafe imperative language with components interacting via procedure
calls and returns, compiling to an abstract machine with protected
compartments (\autoref{sec:instance}).
Despite the simplicity of the setting, this result gives useful
insights.  
First, the source language and compilation strategy enable interesting
% stack smashing -- CH: not just, also ROP
attacks on components with potential buffer overflows, similar to those
found in C.
Second, we illustrate how SCC can be
achieved by the cooperation of a compiler (cleaning and restoring
registers) and a low-level protection mechanism (totally isolating
compartments and providing a secure interaction mechanism using calls
and returns).
Third, our SCC proof adapts a standard technique called {\em trace
  semantics}~\cite{JeffreyR05, PatrignaniC15}, via the reduction to
structured full abstraction.
The closed-world assumption about the context made by structured full
abstraction requires some nontrivial changes to the trace
semantics proof technique.

The remainder of the paper describes each of our three contributions in
detail (\autoref{sec:sc}--\autoref{sec:instance}) and closes by discussing
related work (\autoref{sec:related}) and future directions
(\autoref{sec:conclusion}).
The supplemental materials \ifanon submitted \else
associated \fi with this paper includes:
(a) a Coq proof for \autoref{thm:sfa-to-sc};
(b)~technical details and proofs for the SCC instance from
    \autoref{sec:instance} (while most of these proofs are done only on
    paper, the main structured full abstraction result,
    \autoref{thm:sfa-instance}, is also proved in Coq); and
(c) a trace mapping algorithm in OCaml using property-based
    testing to support \autoref{assumption:definability}.
These materials can be found at:
\url{https://github.com/secure-compilation/beyond-good-and-evil}

\section{Secure Compartmentalizing Compilation}
\label{sec:sc}

We start with an intuitive explanation of compartmentalizing compilation,
its attacker model, and its security benefits, and then introduce {\em
  secure compartmentalizing compilation (SCC)}.

We consider compartmentalization mechanisms provided by the
compiler and runtime system for an unsafe programming language with some
notion of components.%
\footnote{We use the term ``runtime system'' loosely to include operating
  system mechanisms~\cite{Kilpatrick03, GudkaWACDLMNR15, wedge_nsdi2008,
    ProvosFH03, ReisG09} and/or hardware
  protections~\cite{micropolicies2015,sgx,PatrignaniDP16,cheri_oakland2015}
  that may be used by the compiler.}
In \autoref{sec:instance} we will present a simple example in detail,
but for the present discussion it suffices to think informally of C or
C++ enriched with some compartmentalization mechanism.
%
% ---some means of dividing an application
% into components at the source level and some way to dynamically enforce
% component boundaries during execution.
%
%  during its initialization phase, plus a runtime
% enforcement mechanism for ensuring that these component boundaries are
% respected during the rest of its execution.\ch{This final
%   explanation seems harmfully bad now. C with compartmentalization could
%   really mean changes to the C language, compiler, and ``runtime system'',
%   and this part doesn't say any of this.}\bcp{I don't think it's so bad---my
% intuition based on what it says was about what you're describing.}
% CH: removed, especially the initialization phase stuff was very shady
%
This mechanism allows security-conscious developers to break large
applications into mutually distrustful components running with
least privilege and interacting only via well-defined interfaces.
We assume that the interface of each component also gives a precise
description of its privilege.
Our notion of interface here is quite generic: interfaces might include any
information that can be dynamically enforced on components, including module
signatures, lists of allowed system calls, or more detailed access control
specifications describing legal parameters to inter-component calls (\EG
ACLs for files).
We assume that the division of the application into components and the
interfaces of those components are statically determined and fixed
throughout execution.
In \autoref{sec:instance}, we instantiate this picture with a rather simple
and rigid notion of components and interfaces, where components don't
directly share any state and where the only thing one component can do to
another one is to call the procedures allowed by the interfaces of both
components.

We do not fix a specific compartmentalizing compilation mechanism; we
just assume that whatever mechanism is chosen can guarantee that, even
if one component is compromised (e.g., by a control-hijacking attack),
it will still be forced to adhere to its specified interface in its
interactions with other components.
%
% \ch{This seems a bit too specific to
%   \autoref{sec:instance}, isn't it?}\bcp{Isn't the previous paragraph
%   equally specific?}
% CH: ok, ok, I shouldn't be too picky :)
%
What a compromised component {\em can} do in this model is use its access
to other components, as allowed by its interface, to trick them into
misusing their own privileges (confused deputy attacks) and/or attempt to
mount further control-hijacking attacks on other
components by communicating with them via defined
interfaces.
% \bcp{I know they are slightly different concepts, but changing
%   back and forth between ``components'' and ``compartments'' may confuse
%   readers.  Maybe we can explicitly clarify the distinction we're making?
%   (``The compartment is the city wall; the component is the city itself.''
%   If that's what we really mean, and if we're going to be careful to make
%   this distinction consistently.)  Or maybe we should collapse them.}
% \ch{I find this confusing too. Not really sure what we're really doing now.
%   I thought we were supposed to only used compartment in Section 4.}
% Maybe fixed now?

We do not assume we know in advance which components will be
compromised: the compartmentalizing compilation mechanism has to
protect each component from all the others.
This allows developers to reason informally about various compromise
scenarios and their impact on the security of the whole
application~\cite{GudkaWACDLMNR15}, relying on conditional
reasoning of the form: ``If {\em these} components get taken over and {\em
  these} do not, then {\em this} might happen (while {\em that} cannot),
whereas if these other components get taken over, then this other thing
might happen...''
If the practical consequences of some plausible compromise scenario are too
serious\iffull to ignore\fi,
developers can further reduce or separate privilege by
narrowing interfaces or splitting components, 
or they can make components more defensive by dynamically validating the
inputs they receive from other components.

For instance, developers of a compartmentalized web browser~\cite{ReisG09}
might reason about situations in which some subset of plugins and tabs gets
compromised and how this might impact the browser kernel and the remaining
plugins and tabs.
%
% They do not know in advance which subset of tabs or
% plugins will get compromised, but compartmentalization allows them to
% protect against any such subset.\bcp{This point is getting made too many
%   times.}\ch{It's a very important point though, and now it's made in
%   a slightly more concrete example setting.}\bcp{Let's leave it for now, but
% on the next rewrite we should try to remove some...}\bcp{As I'm reading the
% section again, I really still think this sentence should go.}
% CH: ok
%
A possible outcome of this exercise might be noticing that, if the browser
kernel itself is compromised, then all bets are off for all the components
and the application as a whole, so the developers should put extra energy in
defending the kernel against attacks from compromised plugins or tabs.  On
the other hand, if interfaces {\em between} tabs and plugins are
appropriately limited, then compromise of one should not disrupt the rest.

Our goal is to articulate a security property that supports reasoning about
multiple compromise scenarios and clarifies the associated attacker model.
At the same time, our property is intended to serve as a benchmark for
developers of compartmentalizing compilation mechanisms who want to argue
formally that their mechanisms are secure.
In the rest of this section we explain the technical ideas behind the
SCC property and then give its formal definition.

An {\em application} is a set $\ii{Cs}$ of {\em components}, with
corresponding {\em interfaces} $\ii{CIs}$.
\iffull
\bcp{being pedantic, is this saying
that there are two sets and a bijection between them?}\ch{it's trying to say
  that we have a set of pairs; you can see that as a function from components
  to interfaces, but it will not necessarily be bijective.}\bcp{OK.  Not
  sure if it's worth being more pedantic about it.}
\ch{Actually, it might be a bijection, but whatever}
\fi
These components are separately compiled (individually compiling each
component in the set \ii{Cs} is written $\comp{\ii{Cs}}$) and linked
together (written $\link{\comp{\ii{Cs}}}$) to form an executable
binary for the application.
%
% The interface information is\ch{ often?} carried along through
% compilation, linking, and loading, for use at runtime by the dynamic
% part of the compartmentalization mechanism.
% %
% \ch{This last part is probably good intuition, but as we were discussing, this
%   is not part of our SCC property. Could consider moving this to 4 (if
%   it's not there already)? or just commenting out?}
% \bcp{The whole paper is too complicated and detailed.  If there's anything
%   that {\em can} be cut without impairing understanding, it should be.}

SCC quantifies over all {\em compromise
  scenarios}---i.e., over all ways of partitioning the components
into a set of compromised ones and a set of uncompromised ones.
In order to ensure that the set
  of compromised components doesn't expand during evaluation,
% :-)
we require that the uncompromised components be {\em fully defined}
with respect to the interfaces of the compromised components.
That is, the uncompromised components must not exhibit undefined
behaviors
even if we replace the compromised components with arbitrary code (obeying
the same interfaces).

% \ch{Solution to Task \#3.B below. Please review.}

The full definedness condition is a necessary part of
the {\em static compromise model} considered in this paper.
Intuitively, if an uncompromised component can be tricked into an
undefined behavior by interface-respecting communication with other
components, then we need to conservatively assume that the already
compromised components will succeed in compromising this component
dynamically, so it belongs in the set of compromised components from
the start.
This static model is much simpler to reason about than a model of
dynamic compromise, in which one could perhaps provide guarantees to
not-fully-defined components up to the point at which they exhibit
undefined behavior, but which could, however, invalidate standard
compiler optimizations that involve code motion.
Moreover, it seems highly nontrivial to define our property for such a
more complex model.

\autoref{fig:compromise-scenario} illustrates one way to partition
five components $C_1,\dots,C_5$ with interfaces $i_1,\ldots,i_5$,
representing the scenario where $C_2$, $C_4$, and $C_5$ are
compromised and $C_1$ and $C_3$ are not.
In order for this compromise scenario to be considered by our property,
$C_1$ and $C_3$ need to be fully defined with respect to interfaces
$i_2$, $i_4$, and $i_5$, which means $C_1$ and $C_3$ cannot cause
undefined behaviors when linked with any components $B_2, B_4, B_5$
satisfying interfaces $i_2, i_4, i_5$.

Formally, full definedness is a language-specific parameter to our
definition of SCC, just as the program equivalence relations are
language-specific parameters to both SCC and vanilla full abstraction.
For instance, in the simple imperative language in
\autoref{sec:instance}, we will say that components $\ii{Cs}$
% with interfaces $\ii{CIs}$ % CH: no, that's a different parameter
are fully defined with respect to a set of adversary interfaces
$\ii{BIs}$ if, for all components $\ii{Bs}$ 
satisfying $\ii{BIs}$, the complete program
$\link{\comp{\ii{Cs}} \mathrel{\cup} \comp{\ii{Bs}}}$ cannot reduce
to a stuck non-final state (corresponding to undefined behavior) where
the currently executing component is one of the ones in
$\ii{Cs}$ (\IE no component in $\ii{Cs}$ can be
``blamed''~\cite{FindlerF02prime} for undefined behavior).
Full definedness might well be defined differently for another
language; for instance, in a concurrent language undefined behaviors
cannot be as easily modeled by stuckness since normally other threads
can proceed even if one of the threads is stuck.
%
% Full definedness might well be defined differently for another language; 
% for instance, in a concurrent language with a relaxed memory model,
% undefined behaviors cannot be modeled by stuckness of a single
% execution.
% \bcp{What does the relaxed memory model have to do with this?  For
% a simpler concurrent language we could do it??  (I don't think so.)}%
% \ch{My impression was that with enough instrumentation in the semantics
%   one can detect things like data races by looking at a single execution.
%   Things only get wild when giving semantics to racy code.}%
% \bcp{I wasn't thinking about data races---just about the fact that a program
% can have one stuck thread but not be globally stuck, so the definition has
% to change at least a little.}%
% \ch{Changed}
%
One last thing to note is that full definedness of a set of components is
generally a much weaker property than the full definedness of each
individual component in the set. Since the interfaces of
the adversary components \ii{BIs} can (and in \autoref{sec:instance} do)
restrict not only the operations they export but also the operations they
import from \ii{Cs}, the components in the set can export dangerous
operations just to other components in the set; the components actually in
the set might then all use these operations properly, whereas arbitrary
components with the same interfaces could abuse them to trigger undefined
behaviors.

% CH: more pedantic version of this
% Formally, full definedness is defined as follows:
% \begin{defn}[Full definedness]\label{defn:fd}~\\
% We call a set of components $\ii{Cs}$ {\em fully defined} with respect
% to interfaces $\ii{BIs}$ if for all components $\ii{Bs}$ satisfying
% $\ii{BIs}$, $\ii{Cs}$ cannot cause undefined behavior alongside $\ii{Bs}$.
% \end{defn}
% %
% The relation defining when
% $\ii{Cs}$ cannot cause undefined behavior alongside $\ii{Bs}$
% is a language-specific parameter to our
% definition of secure compartmentalization\iffull, just as the ``equivalent
% program'' relations are language-specific parameters to both secure
% compartmentalization and vanilla full abstraction\fi.
% %
% For instance, in the simple imperative language in
% \autoref{sec:instance}, we say that
% $\ii{Cs}$ cannot cause undefined behavior alongside $\ii{Bs}$
% if the \iffull complete\fi program
% $\link{\comp{\ii{Cs}} \mathrel{\cup} \comp{\ii{Bs}}}$ does not reduce
% to a stuck non-final state (corresponding to undefined behavior) in
% which the currently executing component is one of the ones in
% $\ii{Cs}$ (\IE no component in $\ii{Cs}$ can be
% ``blamed''~\cite{WadlerF09} for undefined behavior).

\begin{figure}
\centering
\includegraphics[width=0.8\linewidth]{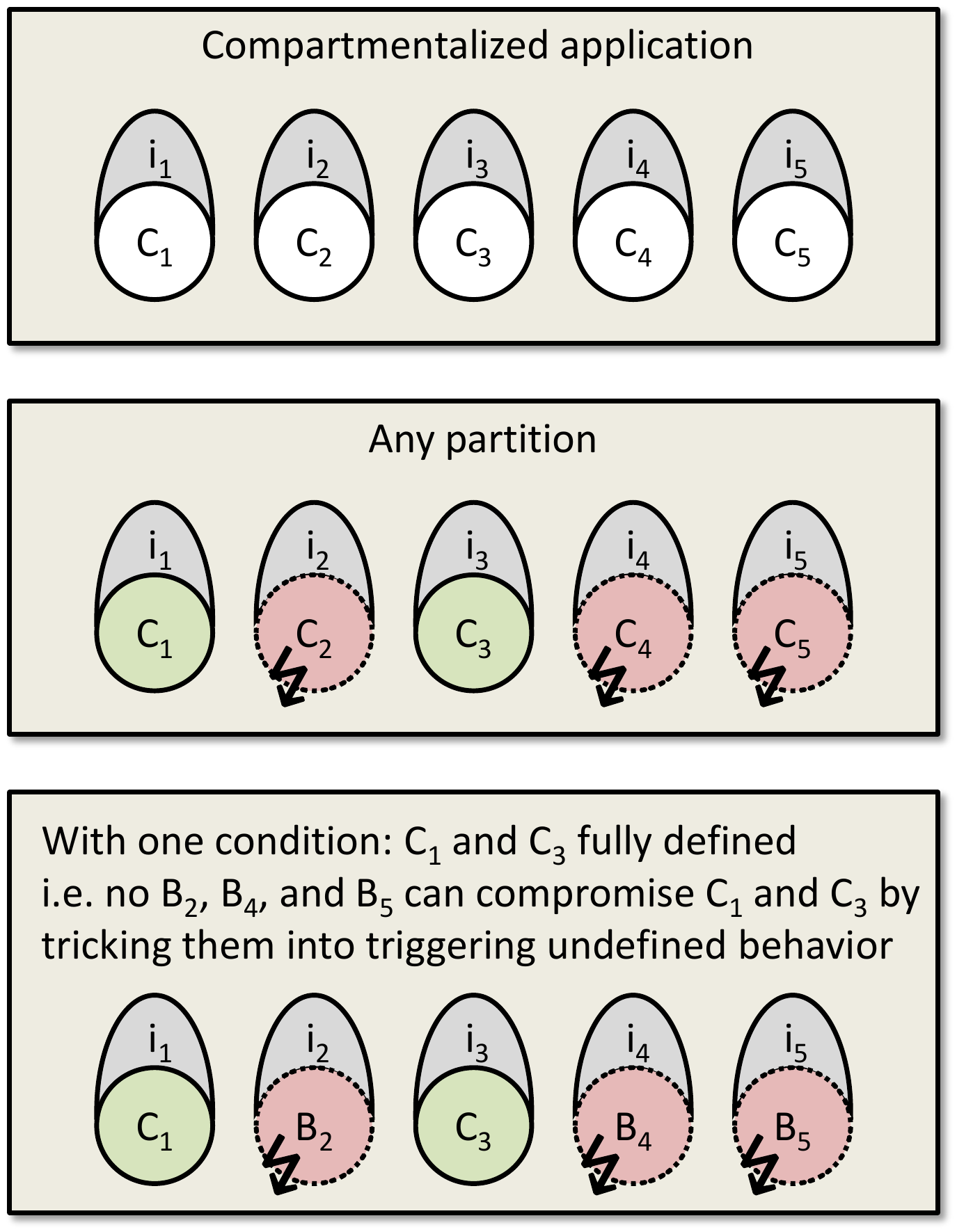}
\caption{Compromise scenarios}
\label{fig:compromise-scenario}
\end{figure}
\begin{figure}[t!]
\centering
\includegraphics[width=0.8\linewidth]{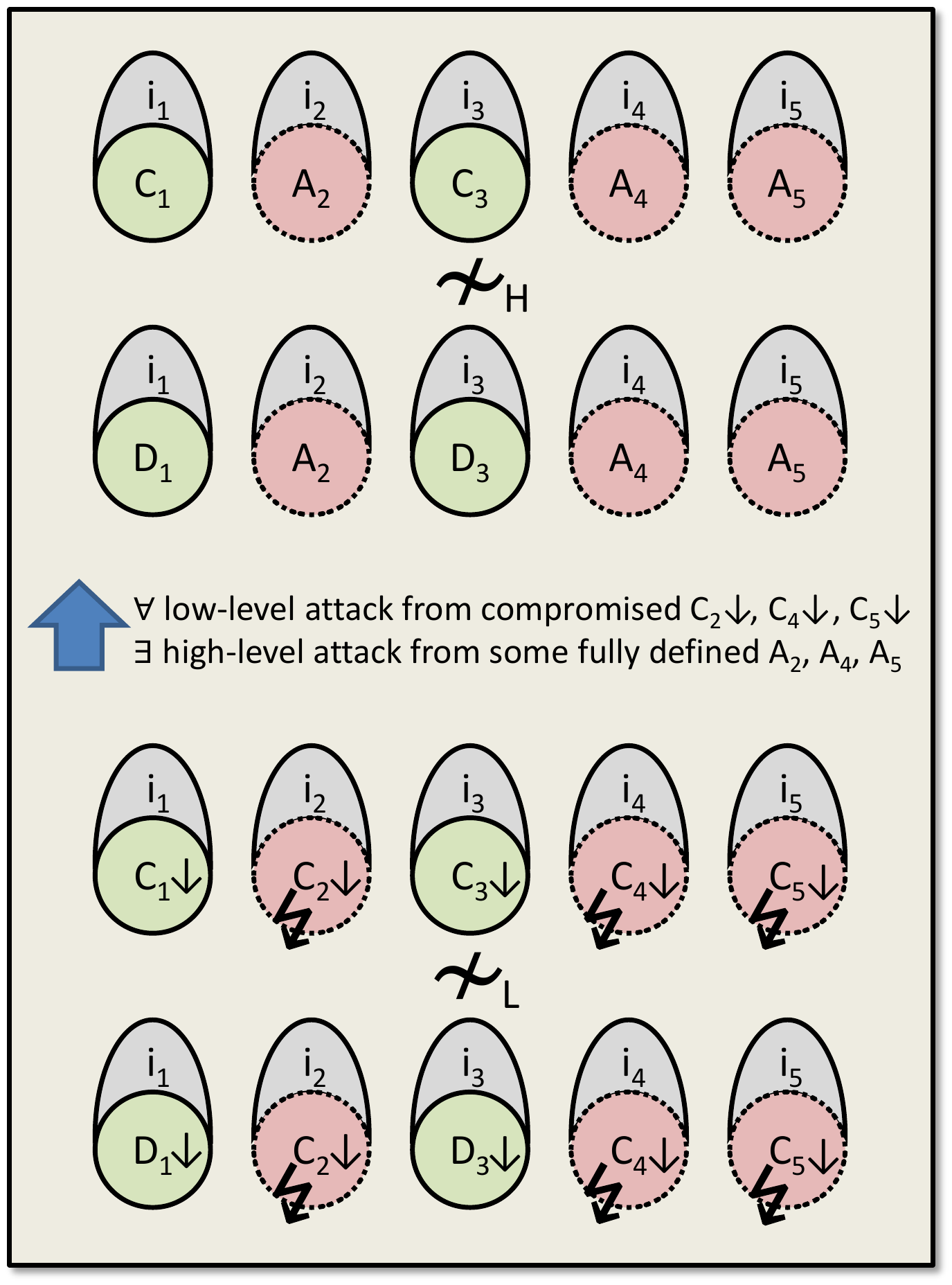}
\caption{SCC distinguishability game, for one of the
  compromise scenarios\ifsooner\bcp{Nit: maybe put yellow backgrounds on the two rows
  upstairs and the two rows downstairs?}\fi}
\label{fig:secure-compartmentalization-more-formal}
\end{figure}

SCC states that, \emph{in all such
  compromise scenarios}, the compiled compromised components must not be
able to cause more harm to the compiled uncompromised components via
low-level attacks than can be caused by some high-level components written
in the source language.
Basically this means that any low-level attack can
be mapped back to a high-level attack by compromised components satisfying
the given interfaces.
\iffull
\bcp{Rest of the paragraph is pretty subtle.  Spell out more?}\ch{Unsure
  what more there is to say}%
\fi
The property additionally ensures that the high-level components
produced by this ``mapping back'' are fully defined with respect to
the interfaces of the uncompromised components.
So with SCC,
instead of having to reason about the low-level consequences of
undefined behavior in the compromised components, we can
reason in the source language and simply replace the compromised
components by equivalent ones that are guaranteed to
cause no undefined behavior.

Formally, SCC is stated by quantifying over multiple
distinguishability games, one for each
compromise scenario, where the individual games are reminiscent of full
abstraction.
The goal of the attacker in each game is to distinguish between two
variants of the uncompromised components.
\autoref{fig:secure-compartmentalization-more-formal} illustrates
these two variants as $C_1, C_3$ and $D_1, D_3$, where we use $\neqh$ and
$\neql$ to indicate that the behaviors of two (high- or
low-level) complete programs are distinguishable, \IE they produce
different observable outcomes when executed.
For this compromise scenario, SCC specifies that,
if compiled compromised components $\comp{C_2}$, $\comp{C_4}$,
$\comp{C_5}$ can distinguish the $\comp{C_1},\comp{C_3}$ and
$\comp{D_1},\comp{D_3}$ variants at the low 
level, then 
there must exist some (fully defined) components $A_2, A_4, A_5$
that distinguish ${C_1},{C_3}$ and ${D_1},{D_3}$ at
the high level.

With all this in mind, the SCC property is
formally\ifsooner\bcp{CSF people are probably going to flag the word
  ``formally''}\bcp{maybe it's a bit better now, after recent changes...}\fi{}
expressed as follows:

\begin{defn}[SCC]~
\begin{itemize}
\item For any complete compartmentalized program and
for all ways of {\em partitioning} this program into
a set of {\em uncompromised} components $\ii{Cs}$
and their interfaces $\ii{CIs}$,
and a set of {\em compromised} components $\ii{Bs}$
and their interfaces $\ii{BIs}$, so that
% formal conditions implicit in the text above:
% $\ii{?} = \ii{Cs} \cup \ii{Bs}$,
% $\ii{Is} = \ii{CIs} \cup \ii{BIs}$,
% $\ii{Cs} ::_H \ii{CIs}$,
% $\ii{Bs} ::_H \ii{BIs}$, and
$\ii{Cs}\text{ is {fully defined} with respect to }\ii{BIs}$, and 
\item for all ways of replacing the uncompromised components with
  components $\ii{Ds}$ that satisfy the same interfaces $\ii{CIs}$
  and are {fully defined} with respect to $\ii{BIs}$, 
\item if $\link{\comp{\ii{Cs}} \mathrel{\cup} \comp{\ii{Bs}}}
    \neql \link{\comp{\ii{Ds}} \mathrel{\cup} \comp{\ii{Bs}}}$,
\item then there exist components $\ii{As}$ satisfying interfaces $\ii{BIs}$
  and {fully defined} with respect to $\ii{CIs}$ such that\\
         $\link{\ii{Cs} \mathrel{\cup} \ii{As}}
    \neqh \link{\ii{Ds} \mathrel{\cup} \ii{As}}$.
\end{itemize}
\end{defn}

As suggested before, our property applies to any {\em fully defined}
sets of components \ii{Cs} and \ii{Ds} (which cannot be dynamically
compromised by some components with interfaces \ii{BIs}\ifsooner\bcp{This way of
  saying it seems too strong: undefined behavior may not be the only way for
  one component to compromise another.}\ch{improvements welcome}\fi).
We conjecture that this full definedness precondition is strictly
required in the static corruption model we are assuming.
It is worth noting that we are not proposing any method for proving that
programs are fully 
defined; this comes with the territory when dealing with
C-like languages.
What we are after is bringing formal foundations
to {\em conditional} reasoning of the form ``if these \ii{Cs} are fully
defined and the remaining components \ii{Bs} get compromised,
then...''

Note that the $\ii{Bs}$ in our SCC definition need not be fully defined---\IE the
property allows the compromised components to contain undefined
behaviors (this may well be why they are compromised!) and promises
that, even if they do, we can find some other components $\ii{As}$
that are able to distinguish between $\ii{Cs}$ and $\ii{Ds}$ in
the source language without causing any undefined behaviors.
Indeed, for those compromise scenarios in which $\ii{Bs}$ are already
fully defined, our SCC property trivially follows from correct
compilation (\autoref{assumption:compiler-correctness}) since in that case
we can always pick $\ii{As} = \ii{Bs}$.

This generic property is parameterized over
  a source and a target language with a notion of component for each,
  source- and target-level notions of linking sets of components ($\bowtie$),
  source- and target-level notions of distinguishability ($\not\sim$),
  a compiler mapping source components to target components ($\downarrow$),
  a source-level notion of interface and an interface satisfaction relation
  (lifted
%pointwise -- CH: pointwise for sets is a bit tricky
  to sets of components and interfaces), and
  a notion of a set of components $\ii{Cs}$ being
  fully defined with respect to a set of adversary interfaces $\ii{BIs}$.

\section{From Full Abstraction to SCC}
\label{sec:fa-not-enough}

\ifsooner
\bcp{Technical question: Is FA an instance of SC for a safe language?  I
  guess no, because we only consider compiled contexts, but is there any
  other reason?}
\fi

\autoref{sec:sc} presented SCC by directly
characterizing the attacker model against which it defends.
In this section we step back and show how SCC can instead be obtained by
starting from the well-established notion of full abstraction and
removing each of the three limitations that make it unsuitable
in our setting.
This results in two properties, {\em structured full abstraction} and {\em
  separate compilation}, which we then combine and instantiate to obtain SCC.
This reduction is
not only theoretically interesting, but also practically useful,
since structured full abstraction can more easily be proved by adapting
existing proof techniques, as we will see in \autoref{sec:instance}.

% The previous section explained the requirements we have
% from a secure compartmentalization property characterizing the
% security guarantees of low-level compartmentalization.
% %
% In this section we step back and look at why full abstraction as
% normally stated in the literature is not enough for giving us a
% formal definition meeting these requirements.
% %
% We start by presenting full abstraction, explain the three limitations
% that make it unsuitable in our setting, and show how each of these
% limitations can be lifted.
%
% BCP: not needed:
% Beyond further explaining the limitations of full abstraction, this
% provides additional intuition about secure compartmentalization and
% about its connection to full abstraction.

\ifsooner
\ch{It might be easier to first remove each of the limitations
  independently, and only then to remove all three at once. The
  problem is that things get rather complex when combining all 3
  fixes, so it might be better to explain the parts separately
  before getting there.}
\ch{Just noticed that the MFA idea only makes sense with undefined
  behaviors, so the only parallel split we can do is for SFA and LFA,
  but MFA needs to build on top of LFA}
\ch{If we go for this we would also need a good name for the
  combination of LFA, SFA, and MFA ... LSMFA? Alternatively, we could
  call the complete combination MFA, and call LFA + separate
  compilation something else; like CFA (for Compositional FA) or
  CLFA (for Compositional Low-Level FA}
\fi

\SUBSECTION{Full abstraction}
\label{sec:fa}

A {\em fully abstract} compiler protects compiled programs from their interaction
with unsafe low-level code and thus allows sound reasoning about security
(and other aspects of program behavior) in terms of the source language.
Fully abstract compilation~\cite{abadi_protection98}
% \bcp{this is not the canonical
%   reference to FA, though---it goes back to Milner.}
% CH: called it fully abstract compilation then
intuitively states that no
low-level attacker can do more harm to a compiled program than some
program in the source language already could.
This strong property requires enforcing all high-level language
abstractions against arbitrary low-level attackers.
% \ch{The ``stronger'' claim might be bogus.  Wouldn't a compiler that
%   swaps the bodies of two procedures (with the same type) be fully
%   abstract but not correct? (Deepak's example)

% Intuitively, a compilation scheme is fully abstract when for each
% low-level context attacking a compiled component there exists a
% high-level context attacking the original component.
% %
% This means that interaction with low-level contexts is as secure as
% interaction with high-level contexts.
% \bcp{We've said this at least three, maybe four times now...}
%
Formally, full abstraction is phrased as a distinguishability game
requiring that low-level attackers have no more distinguishing
power than high-level ones.

\begin{defn}\label{defn:fa-simple}
We call a compilation function (written $\downarrow$) {\em fully abstract}
if, for all $P$ and $Q$,
\[
(\forall A.~ A[P] \eqh A[Q])
\Rightarrow (\forall a.~ a[\comp{P}] \eql a[\comp{Q}]).
\]
\end{defn}

\noindent Here, $P$ and $Q$ are partial programs, $A$ is a high-level
context whose job is to try to distinguish $P$ from $Q$, and $a$ is a
low-level ``attacker context'' that tries to distinguish $\comp{P}$ from
$\comp{Q}$.
The relations $\eql$ and $\eqh$ are parameters to the definition,
representing behavioral equivalence at the two levels.
To be useful, they should allow the context to produce an observable action
every time it has control, letting it convert its knowledge into
observable behaviors.
For instance, a common choice for behavioral equivalence is based on
termination: two deterministic programs are behaviorally equivalent if
they both terminate or both diverge.

When stated this way (as an implication rather than an equivalence),
% \bcp{people not
%   familiar with full abstraction may get lost here, since for them the idea
%   that it might be an equivalence will come out of left field}\ch{can we
%   make these people ignore this paren then, but keep it for the others?}),
% CH: don't think equivalence is the usual
full abstraction is largely orthogonal to compiler
correctness~\cite{leroy09:compcert, KumarMNO14}.
While compiler correctness is about preserving behaviors when
compiling from the source to the target, proving full abstraction
requires some way to map each distinguishing context target to a
sourge-level one, which goes in the opposite direction.
This is easiest to see by looking at the contrapositive:
\[
\forall a.~ a[\comp{P}] \neql a[\comp{Q}] \Rightarrow
\exists A.~ A[P] \neqh A[Q]
\]

% \ch{The following pedantic details are all rather irrelevant for the
%   high-level discussion, if keeping try to deemphasize, or hide back
%   behind notation.}

% \begin{defn}[Full abstraction, in more detail]\label{defn:fa-detail}~\\
% We call a compiler $\downarrow$ {\em fully abstract} if
% $\forall P \eqhstat Q$
% \[
% \begin{array}{l}
% (\forall A.~ \ct{A}{P} \land \cl{A[P]} \Rightarrow A[P] \eqh A[Q])\\
% \iff\\
% (\forall a.~ \ct{a}{\comp{P}} \land \cl{a[\comp{P}]} \Rightarrow
%   a[\comp{P}] \eql a[\comp{Q}])\\
% \end{array}
% \]
% \end{defn}

\SUBSECTION{Problem 1: Undefined behavior}
\label{sec:prob1}

\ifsooner\ch{This was greatly simplified, please review}\fi

The first limitation of full abstraction is that it cannot realistically be
applied to compiling from an unsafe language with undefined behaviors.
% to a deterministic target machine.\ch{The ``to a deterministic target
%   machine'' part makes this argument more specific than in the intro
%   ... where does it come from? doesn't it need motivation?  do we want
%   to keep it? it causes unnecessary(?) troubles at two points below}
%
% \bcp{OK to add this sentence back in, but then some transition is needed
%   between this sentence and the one after it...}%
% In particular, it does not seem possible to find a meaningful instance of
% behavioral equivalence ($\eqh$) for the source language in the presence of
% undefined behaviors.
%
Undefined behaviors are (arbitrarily!) nondeterministic, and no
realistic compiler can preserve this nondeterminism in the target
as required by full abstraction. (Removing it from the source language
would negate the performance and optimization benefits that are the
reason for allowing undefined behaviors in the first place.)
%false

To adapt full abstraction to a source language with undefined behaviors, we
need to restrict attention only to {\em defined}
% CH: for me antonym of undefined is defined (which we use
%     throughout), not well-defined
complete programs in the source language.
% , which will necessarily
% be deterministic if for a fully abstract targeting
% deterministic machine code.\ch{changed
%   to make it clear where this determinism comes from}
% which we assume to be
% deterministic.\ch{This assumption is forced on us solely(?) by the
%   deterministic target, and that's not clear from the current text}
%
% (For a source language like C, implementation specific behavior
%  is anyway properly modeled not using nondeterminism but a global
%  parameter to the semantics~\cite{Krebbers15}. For supporting
%  concurrency in the future we would need to move away from
%  the deterministic target.\ch{rewritten, please review})
%
% obtaining determinism requires also
% expanding out any ``implementation specific'' behavior.%
% \ch{No it doesn't (Task \#4.B)}\bcp{What does ``expanding out''
%   mean?\ch{exposing in the semantics, but we don't need to do that,
%   it can be a global parameter, see implementation-defined.org}})
%
\iffull
% BCP: Took this discussion away again -- we've gone around enough about it
% for now, I think!
\bcp{why do we need this assumption here?  If the source language has a
  little bit of nondeterminism left, and the compiler is willing to preserve
  this nondeterminism in the target code, then everything is fine, right?
  \ch{Maybe, but we assume our target is deterministic from sentence 1}
  However, this discussion makes me wonder something more fundamental: Our
  main target is C / C++, the fully defined parts of these languages are
  still nondeterministic (argument order, etc.)\ch{Argument order is
    not nondeterministic, it's just a global parameter. Even concurrency
    can be expressed without nondeterminism by just using a scheduler
    oracle (\IE global parameter)}\bcp{OK about argument order.  About
    concurrency, I don't buy it: a reasonable compiler may well reduce the
    number of times that the scheduler is called.}
  and compilers do not (and
  would not want to) preserve this nondeterminism.  So aren't we back in the
  same difficulty?}\bcp{We had a long discussion about this by chat and then
  phone, but wound up not sure how to proceed, so we're going to ignore this
  for now}
\ch{I've added a note.}\bcp{I don't understand how the note addresses my
  question.  (I don't actually understand the note in the first place, so
  perhaps this is not surprising.)}

\ch{Q: Would CompCert satisfy this property? I'm a bit worried that
  because of ``implementation specific behavior'' in C, it would not
  (\IE a C compiler can refine non-determinism that's not undefined
  behavior).  But since we don't have such a thing we should be fine
  though. Anyway, I know very little about CompCert.}
\aaa{CompCert C is deterministic, so I think it does (although the
  semantics of the source language is stated as a deterministic
  refinement of a more general, nondeterministic semantics.)}
\bcp{This comment seems related to the discussion we had (that didn't
  converge) about unspecified behavior in C...}

\fi
And even with this restriction, defining full abstraction still
requires a little care. For instance, the following variant is
wrong (formally, {\em defined} is another parameter to this property):%
% \bcp{what does ``defined'' mean here?  Is it also a parameter to the
%   definition?}\ch{of course}\bcp{You say it like it's obvious. :-)  But it
%   isn't.}\ch{added a parenthesis, but it's a bit silly}
\[
\begin{array}{rl}
(\forall A.~ A[P]\text{ and }A[Q]\textit{ defined }\Rightarrow A[P] \eqh A[Q])
&\Rightarrow\\
(\forall a.~ a[\comp{P}] \eql a[\comp{Q}])
\end{array}
\]
Any $P$ and $Q$ that both trigger undefined behavior as soon
as they get control would be considered equivalent in the high-level language
because there is no context that can make these programs defined
while observing some difference between them.
All such programs would thus need to be equivalent at the low level,
which is clearly not the case (since their nondeterminism can be resolved
in different ways by the compiler).
The problem here is that if $P$ and $Q$ trigger undefined behavior
then the context often cannot make up for that and make the program
defined in order be able to cause an observation that distinguishes
$P$ and $Q$.\ifsooner\bcp{got lost in that sentence}\ch{please improve}\fi

\SUBSECTION{Solution 1: Full abstraction for unsafe languages}
\label{sec:lfa}

% \ch{The circular direct definition can't be a definition, but it can
%   be a property we could require if we wanted to postpone the
%   discussion about blame a bit. We used this property to prove
%   equivalence to Reviewer A's alternative definition so might
%   anyway state it and prove it formally for our instance.}
% DONE

The responsibility of keeping $A[P]$ defined should be thus shared
between $A$ and $P$.
For this we assume a compositional notion of {\em fully defined}
behavior for programs and contexts as two parameters to
\autoref{defn:lfa} below.
We require that these parameters satisfy the following properties: (1)
a program is fully defined if it does not cause undefined behavior in
any fully defined context, and (2) a context is fully defined if it
does not cause undefined behavior when we plug any fully defined
program into it.\ch{Above, should we have ``if'' or ``iff''?}
Note that properties (1) and (2) are circular and therefore cannot be
used as the definition of full definedness.
For specific languages (\EG the one in \autoref{sec:instance}) we can
break this circularity and define full definedness using {\em
  blame}~\cite{FindlerF02prime}: intuitively we call a partial program
{\em fully defined} when it cannot be blamed for undefined behavior in
any context whatsoever.
Similarly, we call a context fully defined when it cannot be blamed
for undefined behavior for any program that we plug into it.
% \bcp{I found this discussion hard to grok.  Is the point that we're
% replacing ``any fully defined context/program'' by ``any...''?  I missed
% that on the first couple of readings.  (Italicizing {\em any} would help.)
% If so, I don't see how adding blame helps make this move.}\ch{don't really
% understand this comment.}\bcp{tried to clarify the text}
%
We expect such a blame-based definition to satisfy the properties (1)
and (2) above.
\ifsooner
\ch{We haven't proved it though! We should at least try to sketch the
  proof. Nontrivial; we need to map an arbitrary high-level context
  that causes an undefined behavior in the program into a fully
  defined high-level context with the same behavior. Trace semantics
  for the high-level language?}
\fi
Full definedness allows us to introduce a new variant of
full abstraction that applies to unsafe source languages with
undefined behavior:

\begin{defn}[Full abstraction for unsafe languages]\label{defn:lfa}~\\
  We call a compiler $\downarrow$ for an unsafe language {\em fully
    abstract} if for all {\em fully defined} partial programs $P$ and
  $Q$%
\[
\begin{array}{rl}
(\forall A.~ A~\textit{fully defined} \Rightarrow A[P] \eqh A[Q])&\Rightarrow\\
(\forall a.~ a[\comp{P}] \eql a[\comp{Q}]).
\end{array}
\]
\end{defn}

Requiring that $P$, $Q$, and $A$ are fully defined means that we can safely
apply $\eqh$ to $A[P]$ and $A[Q]$, because neither the programs nor
the context can cause undefined behavior.
This property is incomparable with the original definition of full
abstraction.
Looking at the contrapositive,
\[
\begin{array}{r}
\forall P, Q\textit{ fully defined}.\qquad
   (\exists a.~ a[\comp{P}] \neql a[\comp{Q}]) \\
   \Rightarrow(\exists A.~ A~\textit{fully defined} \wedge A[P] \neqh A[Q]),
\end{array}
\]
the $P, Q\textit{ fully defined}$ pre-condition makes this weaker than
full abstraction, while the $A~\textit{fully defined}$ post-condition makes it stronger.
%
% \bcp{the latter claim is not clear to me: $A~\textit{fully defined}$
%   is under an iff, so a priori it should make the whole condition
%   incomparable}
% CH: The whole phrase starts with
%     "If we look at the interesting (left-to-right) direction"
%     and then shows this implication formally.
%     There is no iff. I think this is clear already.
%
The post-condition greatly simplifies reasoning about programs 
by allowing us to replace reasoning about low-level contexts with
reasoning about high-level contexts {\em that cannot cause undefined
behavior}.

One might wonder whether the
$P, Q\textit{ fully defined}$ pre-condition is too restrictive,
since full definedness is a rather strong property, requiring each 
component to be very defensive about validating inputs it receives from
others.
In the static compromise model inherent to full abstraction and
without additional restrictions on the program's context, we must 
be conservative and assume that, if any context can cause
undefined behavior in a program, it can compromise it in a way that the
compiler can provide no guarantees for this program.
The structured full abstraction definition below will in fact restrict
the context and thus use a weaker notion of full definedness.
Moreover, separate compilation will allow us to quantify over all splits of a
program into a fully defined partial program and a compromised
context, which also makes the presence of the full definedness
pre-condition more palatable.

\SUBSECTION{Problem 2: Open-world assumption about contexts}
\label{sec:prob2}

% \ch{Before we can properly explain the problem (or structured full
%   abstraction) we need all those pesky hidden side conditions}

% \bcp{This subsection gets very technical all of a sudden...  One thing that
%   makes it harder to understand is that the definition is given before all
%   the terms and notations that it depends on, but even after sorting that
%   out I fear it's going to be hard to understand.}
% CH: simplified the hell out of it

% Before we can explain the second problem and our solution, we need to
% make our full abstraction definition more precise: \bcp{... i.e., I
%   think making it {\em a bit} more precise is not working...}\ch{Dropped
%   a bit then. It is precise.}\ch{It might be too much detail for just
%   showing that some restrictions {\em are not there}. So do you think
%   that readers would just believe us that the technical
%   restrictions we don't even show in the paper don't include any
%   restrictions on the context's size, shape, interface, or privilege?
%   In that case we could consider just skipping the precise details
%   here ...  they might be needed later too though, we could check
%   again once we have the rest of the paper written.}
% \ch{TODO: Yes we can and should do this simplification}

% \ch{Problem statement: fully abstraction makes an open-world
%   assumption about the context; we need a closed-world one.
%   See the intro for more.}

While full abstraction normally requires the contexts
to be compatible with the partial program, for instance by respecting
the partial program's typed interface\iffull (see \autoref{app:fa-detail})\fi,
these restrictions are minimal
and do not restrict the shape%
\ifsooner\bcp{vague}\ch{for instance the breakup
  of the context into components}\fi, size,
\ifsooner\bcp{SC doesn't
  restrict the size either}\ch{It very well could. The only way we
  avoid this now in our instance is by making low-level components
  have infinite separate address spaces. If we went to a more
  realistic model of one single shared address space then hiding the
  sizes of things becomes very expensive and cumbersome, so one better
  option might be to expose sizes in the shape.}\bcp{Fair enough.  But it's
  a minor point in the present context, and IMO mentioning it distracts from
  the main point.}\fi
exported interface, or privilege
of the contexts in any way.
This {\em open world} assumption about contexts does not fit with our
compartmentalization setting, in which the breakup of the
application into components is fixed in advance, as are the
interfaces (and thus privileges) of all the components.
In our setting, the definition of full abstraction needs to be refined
to track and respect such structural constraints; otherwise a
low-level context with 2 components might be mapped back to a
high-level context with, say, 3 components that have completely different
interfaces, and thus privileges.
In particular, the high-level components' interfaces could give them
more privileges than the low-level components had, 
increasing their distinguishing power.

\ifsooner
\bcp{Because of the focus on shape, this reads like a technical issue rather
  than a fundamental conceptual one.  IMO think the emphasis should be
  on privilege.  Ideally, it should include an example showing why it makes
  a real difference.}
\ch{TODO: will try}
\ch{Removed ``Technically, \autoref{thm:sfa-to-sc} would fail without
  this change.'', since it contributed to your problem}
\fi

\ifsooner
\ch{This might still be relevant above, or not}
\yj{If there is a connection between the notion of low-level
compartment and a high-level notion of component, then we can more
accurately model low-level attackers as high-level attackers having
taken over the precise high-level components that correspond to the
compromised low-level components. This allows more precise reasoning
in the high-level than full abstraction, in which the corresponding
high-level attacker could have no relation whatsoever to the low-level
one, so that we can only give guarantees if we are safe
against \emph{any} high-level attacker. In the SFA setting, we can
give guarantees as long as we are safe against the high-level
attackers \emph{that have the same shape as the low-level attacker},
i.e., in our setting, against the attackers that control exactly the
components that we placed on the ``distrusted'' barrier.}
\fi

\SUBSECTION{Solution 2: Structured full abstraction}
\label{sec:sfa}

We therefore introduce a structured variant of full abstraction, in which
partial programs (indicated by $\bullet$ below) and contexts ($\circ$) are
assigned dual parts of predefined complete program {\em shapes}.
A shape might be anything, from a division into components with their
interfaces (as in \autoref{thm:sfa-to-sc} below), to, \EG the maximum
size of a component's code after compilation (which might expose component
sizes in a setting where it's too costly to hide them by padding to a fixed
maximum size~\cite{PatrignaniDP16}).
%
% For instance, in \autoref{thm:sfa-to-sc} we will obtain secure
% compartmentalization by instantiating shapes with sets of component
% interfaces where for each interface we mark whether it belongs to the
% program or to the context.\bcp{I see a danger of readers getting confused by
% the fact that we're introducing a mechanism (interfaces marked as good or
% bad) that we're then going to argue is not what's wanted (in Problem 3).
% \ch{It is what's wanted. I'm starting to believe that we should move
%   \autoref{thm:sfa-to-sc} right after this solution, and then
%   introduce problem 3, by saying, sure we can get SC from SFA,
%   but there are still instances of SFA that are very very bad,
%   so we want to precisely characterize in which way SC and our
%   instance of SFA are good, and these other instances are bad.
%   I think this would generally help understanding, and also
%   we won't need to make this (confusing?) point in advance
%   here, but can make it close to the theorem. WDYT?}\bcp{Are you saying that
% Theorem 3.5 doesn't require the discussion about separate compilation?  If
% so, then yes, moving it closer to here is better.}
% I think the next sentence is trying to address this, but I don't really
% understand it---why is being forced to mark which components
% are good more general than not being forced to?}
%
% The structured full abstraction property below is, however, more
% general and should apply in any situation in which restricting the
% shape of programs and contexts is needed:

\begin{defn}[Structured full abstraction]\label{defn:sfa}~\\
  We say that a compiler $\downarrow$ for an unsafe language
  satisfies {\em structured full abstraction} if, for all
  {\em program shapes} $s$ and partial programs
    $P \hasshape{\bullet} s$ and $Q \hasshape{\bullet} s$ so that
    $P$ and $Q$ are {\em fully defined} with respect to contexts of shape
    ${\circ}s$,
\[
\begin{array}{r}
\left(\begin{array}{r}
\forall A \hasshape{\circ} s.\quad
   A~\textit{fully defined}\text{ wrt. programs of shape }{\bullet}s\\
   \qquad\Rightarrow A[P] \eqh A[Q]
\end{array}\right)\\[1em]
\Rightarrow(\forall a \hasshape{\circ} s.~ a[\comp{P}] \eql a[\comp{Q}])
.
\end{array}
\]
\iffull
\bcp{Wait... isn't this saying that the shape of a program is preserved by
  compilation---i.e., that we use the very same set of high- and low-level
  shapes at both levels?  I was wondering about this before, and you said we
didn't do it!  :-p}
\ch{I said we didn't do it in definition 2.1 :-)
  The trick is that $A \hasshape{\circ} s$ is a different
  relation compared to $a \hasshape{\circ} s$, so it can take into
  account that compilation is happening. Should make this explicit?}\bcp{I
  don't think we need to be too pedantic here---it's clear what's
  going on.  I was just objecting to a specific bit that I found
  misleading.} 
\bcp{Let's leave improving this for later.}
\fi
\end{defn}

\noindent This property universally quantifies over any
complete program shape $s$ and requires that $P \hasshape{\bullet} s$
(read ``program $P$ has shape $s$''), $Q \hasshape{\bullet} s$, and
$A \hasshape{\circ} s$ (``context $A$ matches programs of shape
$s$'').
\iffull
\bcp{Something's a bit off about the intuition here: you say above
  that a shape might be a bunch of interfaces or the sizes of components;
  the implication is that $P \hasshape{\bullet} s$ means that $P$ includes
  all the components described by $s$.  But if this is the case, then $A
  \hasshape{\circ} s$ doesn't place any constraint on the components in
  $A$ (because $s$ doesn't describe them).  Deserves a note.}
\ch{We do call them ``predefined {\em complete program} shapes'' from
  the start, but we can add more if needed.}
\fi
Moreover, the property only requires programs that are fully defined
with respect to contexts of the right shape, and dually it only
considers contexts that are fully defined with respect to programs of
the right shape.

\SUBSECTION{Recovering secure compartmentalizing compilation}
\label{sec:recovering-scc}

SCC can be recovered in a natural
way as an instance of structured full abstraction (\autoref{defn:sfa}).
%  that satisfies separate
% compilation (\autoref{defn:separate-compilation}).
%
For both source and 
target languages, we take partial programs and contexts be sets of
components and context application be set union.
Compilation of sets of components works pointwise.
%
% One can easily check that this is already enough to give us
% separate compilation.
%
To obtain an instance of structured full abstraction we additionally
take shapes to be sets of component interfaces, where each interface
is marked as either compromised or uncompromised.  
\iffull \bcp{This clarifies my concern above, though it comes a bit late.}
\fi

\iffull
\bcp{Wait... I thought we were going to obtain SCC by combining the
  solutions to all three issues.  But it appears that we only needed to deal
  with two issues to get to SCC.  Is the point that we don't yet have
  something {\em equivalent} to SCC?}
\ch{Problem 3 explains this. We obtained SCC by choosing a great
  instance of SFA, but there are also crappy instances in there.
  Solution 3 is about distinguishing between the two.}
\bcp{It's a shame for the reader to be confused and then unconfused: better
  to structure things to avoid the confusion in the first place.  But let's
  leave this for later.}
\fi

\ifsooner\ch{Added determinism assumptions}\fi
\begin{thm}\label{thm:sfa-to-sc}
  For any deterministic target language and any source language that
  is deterministic for defined programs, structured full abstraction
  instantiated to components as described above implies SCC.
\end{thm}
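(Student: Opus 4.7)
The plan is to obtain SCC as a direct consequence of the contrapositive of structured full abstraction, packaging each SCC compromise scenario into a suitable shape. Given uncompromised components $\ii{Cs}$ with interfaces $\ii{CIs}$, compromised components $\ii{Bs}$ with interfaces $\ii{BIs}$, and an alternative choice $\ii{Ds}$ as in the SCC definition, I would let $s$ consist of $\ii{CIs} \cup \ii{BIs}$ with each interface in $\ii{CIs}$ marked uncompromised and each in $\ii{BIs}$ marked compromised. Under the instantiation of \autoref{sec:recovering-scc}, $\bullet s$ then holds of a source set of components exactly when that set satisfies $\ii{CIs}$, and $\circ s$ holds (at either level) of a set satisfying $\ii{BIs}$.

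I would then check that the preconditions of \autoref{defn:sfa} are satisfied for this $s$. Both $\ii{Cs}$ and $\ii{Ds}$ have shape $\bullet s$ because they satisfy $\ii{CIs}$. The SFA precondition "programs fully defined with respect to contexts of shape $\circ s$" unfolds, under the instantiation, to "fully defined with respect to components satisfying $\ii{BIs}$," which is exactly the SCC hypothesis on $\ii{Cs}$ and on $\ii{Ds}$. Moreover, $\comp{\ii{Bs}}$ has shape $\circ s$ at the target level since $\ii{Bs}$ satisfies $\ii{BIs}$.

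The main step is then the contrapositive of SFA. The SCC hypothesis $\link{\comp{\ii{Cs}} \cup \comp{\ii{Bs}}} \neql \link{\comp{\ii{Ds}} \cup \comp{\ii{Bs}}}$ supplies a target witness $a = \comp{\ii{Bs}}$ of shape $\circ s$ that distinguishes $\comp{\ii{Cs}}$ from $\comp{\ii{Ds}}$, and SFA accordingly yields a source context $\ii{As}$ of shape $\circ s$, fully defined with respect to programs of shape $\bullet s$, with $\ii{As}[\ii{Cs}] \neqh \ii{As}[\ii{Ds}]$. Unfolding the instantiation, $\ii{As}$ satisfies $\ii{BIs}$ and is fully defined with respect to components satisfying $\ii{CIs}$, and $\ii{As}[\ii{Cs}] = \link{\ii{Cs} \cup \ii{As}}$, so the conclusion is exactly what SCC demands.

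The one place I expect to need care is the role of the determinism hypotheses: they are what ensure that the distinguishability relations $\neql$ and $\neqh$ behave as honest negations of the behavioral equivalences that appear in \autoref{defn:sfa}, so that taking the contrapositive really does transport a low-level distinguisher into a high-level one (in a non-deterministic setting one would have to worry about refinement of behavior sets breaking this step). Once that point is granted, the rest of the proof is essentially bookkeeping: unfolding the shape markings and the instantiation of linking and pointwise compilation, and observing that the full-definedness side-conditions on $\ii{As}$ required by SCC are precisely those that SFA already guarantees for the context it returns.
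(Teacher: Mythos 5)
Your proposal is correct and follows exactly the route the paper intends: the paper's own proof is deferred to Coq with the remark ``straightforward, though tedious,'' and the argument it mechanizes is precisely your instantiation---shapes as interface sets marked compromised/uncompromised, $\comp{\ii{Bs}}$ as the low-level witness context of shape ${\circ}s$, the contrapositive of structured full abstraction to produce $\ii{As}$, and determinism to make $\neql$ and $\neqh$ genuine negations of the behavioral equivalences. Your identification of the determinism hypotheses as the only non-bookkeeping point is also consistent with where the paper flags them (they are added explicitly to the theorem statement and nowhere else).
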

\begin{proof}
Straightforward, though tedious.  A machine-checked Coq proof can be found
in the auxiliary materials.\Coqed
\end{proof}

\ifsooner
\ch{This is the direction we need for \autoref{sec:instance},
  but the other direction is also interesting for the story}

\ch{My claim [probably still too strong] is that the conditions from
  SFA + separate compilation uniquely determine an instance as soon
  as we choose to programs
  = contexts = sets of components, and context application = set union
  and that instance is simply SFA(MD). If this is not the case, then
  what other conditions can we add above to make it so?}
\fi

\SUBSECTION{Problem 3: Statically known trusted/untrusted split}

While SCC can deal with multiple
compromise scenarios, not all instances of structured full abstraction
can.
In general, if a compiler satisfies (structured) full abstraction,
how can we know whether it can deal with multiple compromise scenarios,
and what does that even mean?
While we can instantiate full abstraction to a
{\em particular} compromise scenario by
letting the partial program $P$ contain the uncompromised components and the
low-level context $a$ contain the compromised ones, a
fully abstract compiler
(together with its linker, loader, runtime \ETC) might exploit this
static split and introduce only one single barrier protecting the
uncompromised components from the compromised ones.
When presented with a different compromise scenario for the same program,
the compiler could adapt and produce a different output.

The source of confusion here is that a fully abstract compiler does not need
to compile contexts---only programs. In fact, even the {\em types} of
contexts and of partial programs might well be completely different (\EG the
types of lambda calculus contexts and terms are different; a compiler for
one cannot compile the other).
Even when the types do match so that we can apply the same
compiler to the context, the low-level context-application operation
$\comp{A}\hspace{-0.35em}[\comp{P}]$ can freely exploit the fact
that its first argument is a compiled untrusted context and its second
argument is a compiled trusted program that should be protected from
the context.
So if we start with a complete high-level program $C$ and
look at two different compromise scenarios $C = A_1[P_1]$ and
$C = A_2[P_2]$, compiling each of the parts and combining the results
using context application does not necessarily yield the same result
(\IE it could well be that
$\comp{A_1}\hspace{-0.35em}[\comp{P_1}] \not=
\comp{A_2}\hspace{-0.35em}[\comp{P_2}]$) or indeed even behaviorally
equivalent results (\IE it could well be  that
$\comp{A_1}\hspace{-0.35em}[\comp{P_1}] \neql
\comp{A_2}\hspace{-0.35em}[\comp{P_2}]$).
This means that the user of a fully abstract compiler may need to commit
{\em in advance} to a single compromise scenario.

This weakness significantly limits the applicability of full abstraction.
After all, uncertainty about sources of vulnerability is precisely the
motivation for compartmentalizing compilation: if we knew \iffull in advance\fi
which components
were safe and which were not, there would be no reason to distinguish more
than two levels of privilege, and we could merge each group into a single
mega-component.
Even in rare cases where we are certain that some code cannot be
compromised---for instance because we have proved it
safe---protecting only the verified code from all the rest using a fully
abstract compiler~\cite{Agten0P15} is still suboptimal in terms of
protection, since it provides no guarantees for all the code that is not
verified.

Moreover, this weakness is not hypothetical:
several fully abstract compilers proposed in the literature are only
capable of protecting a single trusted module from its untrusted
context~\cite{PatrignaniASJCP15, AgtenSJP12, LarmuseauPC15,
  PatrignaniCP13, patrignani_thesis} (recently proposed
extensions~\cite{PatrignaniDP16} do aim at lifting this restriction in
some cases).
While this setup is appropriate when all one wants to achieve is
protecting trusted (e.g., verified) code
from its untrusted context~\cite{Agten0P15}, it is not suitable for
our compartmentalizing compilation setting where we do not know in advance
which components will be dynamically compromised and which ones not,
and thus we want to simultaneously protect against all possible compromise
scenarios.

\SUBSECTION{Solution 3: Separate compilation}
\label{sec:separate-compilation}

We can address this by requiring that the compiler toolchain have one
additional property: 

\begin{defn}\label{defn:separate-compilation}
  We say that the compiler toolchain (\IE the compiler $\comp{-}$, the linker
  $-[-]$, and the runtime system embodied in the low-level behavioral
  equivalence) satisfies {\em separate compilation} if
\begin{enumerate}
\item the type of contexts and programs is the same (so that the
  compiler can also compile contexts), and
\item $\comp{(A[P])} \eql{}\; \comp{A}\hspace{-0.35em}[\comp{P}]$ for all
$A$ and $P$.
\end{enumerate}
\end{defn}
%\bcp{of appropriate shapes??}
% CH: it's only the usual compatible and complete
%     so that we can also apply this to an LFA or plain FA compiler
%     without a notion of shape

% \bcp{what confuses me about this is that it is phrased in terms of a single
%   context and a single program, whereas the intuition from section 2 is
%   multiple compartments.  could a ``program'' be just part of compartment,
%   or must it be some collection of complete compartments?  again, i think
%   the partial formality of this section is not helping...}
% %
% \ch{Just to make sure I understand, your complaint is that this is too
%   formal, not that it's not formal enough, right?}
% %
% \ch{Or is it again the fact that it's too general and you're still
%   hanging on to the very concrete intuition from section 2, and can't
%   easily generalize from that?}
% \bcp{Both, I think. :-)  It's not clear in this section when we're talking
%   about (variants of) FA in general and when we're focusing on a
%   compartmentalized setting as described in 2.  On top of that, the
%   ``partial formality'' leads me to wonder exactly what is meant by things,
%   rather than just accepting them in vague and general terms.}

Requiring that context application and compilation commute (condition 2)
implies that, if some complete program $C$ can be written
as both $C = A_1[P_1]$ and $C = A_2[P_2]$, then separately compiling
each of these splits yields behaviorally equivalent results:
$\comp{(A_1[P_1])} \eql\; \comp{(A_2[P_2])}$.
With separate compilation, full abstraction for an unsafe language
(\autoref{defn:lfa}) can be instantiated as follows:
%in contra-positive form
\[
\begin{array}{r}
\forall B.~\forall P,Q\textit{ fully defined}.\qquad
(\comp{(B[P])} \neql \comp{(B[Q])})\\
\Rightarrow (\exists A.~ A\textit{ fully defined} \wedge A[P] \neqh A[Q])
\end{array}
\]
One compelling reading of this is that, for all compromise scenarios
(ways to break a complete program into a compromised context $B$ and
an uncompromised program $P$),
and for all programs $Q$ that we can substitute for $P$, if the
context $B$ can distinguish $P$ from $Q$ when compiled to
low-level code, then there exists a fully defined context $A$
that can distinguish them at the high-level.
% Note that this reading
% is very similar to one reading of secure compartmentalization.

In a language without undefined behavior, this property would
trivially follow just from (whole program) correct compilation (see
\autoref{assumption:compiler-correctness} below) by picking $A = B$.
However, it is nontrivial for a language in which
context $B$ might cause undefined behavior, since then correct
compilation does not apply for $B[P]$ and $B[Q]$.
In our setting, this property allows us to avoid reasoning about the
implications of undefined behavior in a low-level
context and instead consider just fully defined high-level
contexts. 

It is trivial to check that our instance of structured full
abstraction from \autoref{thm:sfa-to-sc} does satisfy separate
compilation.
It should also be easy to show that many previous fully abstract
compilers~\cite{PatrignaniASJCP15, AgtenSJP12, LarmuseauPC15,
  PatrignaniCP13, patrignani_thesis} do not satisfy separate
compilation, since they were not designed to support a setting of
mutual distrust.

\ifsooner
\bcp{Can we spell this out in a little more detail?  Both why it works
  here and why it would not work with undefined behavior?}
\ch{Tried to improve this explanation. Please let me know if
  unfolding \autoref{assumption:compiler-correctness} would be even better.}
\fi

\ifsooner
\ch{I think we already made the main point here. Move the rest to
  \autoref{sec:sc} and explain it there ... if needed?}

When looking at SFA for a compiler with separate compilation, the
partial instantiation becomes the following:
\[
\begin{array}{l}
\forall B,P,Q,s.~
     P \eqhstat Q \land \ct{B}{P} \land \cl{B[P]} \land \\
     B \hasshape{\circ} s \land P \hasshape{\bullet} s \land Q \hasshape{\bullet} s \land
     P\text{ and }Q\text{ fully defined wrt. }{\circ}{s} \land\\
     (\comp{(B[P])} \neql \comp{(B[Q])})\\
    \Rightarrow
    \left(\begin{array}{rl}
    \exists A.& \ct{A}{P} \land \cl{A[P]} \land A \hasshape{\circ} s\\
      &A\text{ fully defined wrt. }{\bullet}{s}
     \land A[P] \neqh A[Q]
    \end{array}\right)
\end{array}
\]
%\ch{(assuming that $\ct{A}{P}$ implies $\ct{\comp{A}}{\comp{P}}$)}
\fi

\ifsooner
\ch{Old comment about Problem 2, but could take it more generally}
\bcp{I think the fundamental
  problem is that we're not yet being completely formal about what the
  language is (components, interfaces, types, source language, ...), but
  the terms we're defining here rely on having clear intuitions about all
  that... }
\ch{Full abstraction is a general property that can be instantiated
  for {\em arbitrary} languages and compilers. Would it help if we
  stressed (even more) in advance that this whole section is about
  full abstraction {\em in general}, and how it can be fixed {\em in
    general} so that it works well when later instantiated to our more
  concrete compartmentalization setting, but also in any other setting
  in which similar problems are encountered?}
\bcp{Yes, this would help a lot!  (Maybe it would also help to mention one
  or two such settings explicitly?)}
\fi

\ifsooner
\clearpage
\section*{Discussion about SFA+separate compilation}

\yj{With these conditions, as you mention, all possible
splits \emph{into a context and a partial program} lead to the same
behavior in the low-level. However this property does not ensure that
all possible splits of a program \emph{into components} lead to the
same behavior.\ch{This will be solved as soon as we choose that
  program = context = set of components and context application = set
  union, but I guess your point is we could force people more into
  picking this instance.}
So this property only forces you to be honest about not
dealing all mutual distrust compromise scenarios if you don't: you
will only protect against all compromise scenarios if \emph{(4) all
mutual distrust compromise scenarios can be represented as a split
between a context and a partial program}. If this is not intended, we
1could consider adding (4).}\ch{What does (4) mean formally though?
  What exactly is a ``mutual distrust compromise scenario''? In the current form
  it can be an intuitive side-remark (like when we explain what
  behavioral equivalence should do), not part of a formal definition.}
\yj{Now, another view on this is that the splits you allow between
partial program and context \emph{define} the compromise scenarios you
protect against and it might be OK not to necessarily deal with all
possible mutual distrust compromise scenarios as long as that
coincides with your attacker model (\IE if your attacker model is not
mutual distrust).}
\yj{An example: I might instantiate definitions with partial programs
and contexts being lists of components and context application defined
as $as[ps] =~\bowtie(as + ps)$ in a very ``positional'' fashion. (1)
(2) and (3) all hold, however I can only represent compromise
scenarios in which the attacker takes the $n$ \emph{leftmost}
components (for any $n$).}
\ch{I'm still not fully convinced one way or the other.  The example
  you give could be a fine component model (\IE rings of protection),
  even if it's not exactly the one we took in SC. So I guess it
  depends quite a bit on how much we want to force the whole SC
  component model in this property already, as opposed to allowing any
  reasonable component model. And the one above seems reasonable, can
  you find a non-reasonable one?}
\ch{If we want to force the SC model one thing we could require
  formally would be that context application commutes, \IE
  $A[P] = P[A]$.}
\ch{Was discussing with Yannis, and commutativity is not the only
  thing we could require, but also associativity, and the existence of
  a unit element. The question still stands, is anything broken if we
  don't require context application to be a monoid?}

\ch{How does structured full abstraction deal with the quantification
  over all compromise scenarios though (problem~3)?}\ch{Yannis?}
\yj{SFA itself just asks for a connection between the low-level
attacker and the high-level attacker. It says nothing about all
compromise scenarios. In fact, it has exactly the trusted/distrusted
flavor of full abstraction in this respect, and Patrignani et al's PMA
work could definitely satisfy an instance of SFA where the definition
for low-level context application means putting a physical barrier
between attacker and program.}
\ch{So how can a property that has one of the problems of
  full abstraction baked in imply a property that doesn't
  have this problem? What's the catch? (Hint: I think it
  has to do with only proving this for an MD instance of SFA,
  and indirectly to the question below about separate compilation)}
\yj{My feeling is the same as yours; \IE that's part of our
  instance definition, in particular in our instance for context
  application. Maybe Arthur has an idea about this? (He's proved this
  part, right?)}
\yj{The interesting property to me seems to be that
  $\forall \pi{}~ps, \bowtie{} \pi{}(ps) \sim{} \bowtie{} ps$. We then define context
  application as $as[ps] \triangleq{} \bowtie{} (ps + as)$.
  Say now that your real program is $\bowtie{} cs$.
  You can choose your compromise scenario $ps, as, \pi{}$ such that
  $cs = \pi{}(ps + as)$. SFA gives you guarantees about the behavior of
  $as[ps] =~\bowtie{} (ps + as)$. But your real program, $\bowtie{} \pi{}(ps + as)$, behaves
  the same as this one because of the property above, so you get
  guarantees for your real program as well.}
\yj{What this property says intuitively is that the behavior of your
  components does not depend on their position; there are no ``safer''
  positions that would be behind a barrier when other positions would
  not.}
\ch{This property seems like a part of it, but my feeling is
  that there is quite a bit more to it.}

\ch{Starting to wonder whether there is a more general (than SFA(MD))
  instance of SFA that has all the good properties we want explicitly
  stated, instead of only derived implicitly in some proof.}
\ch{At this point even a more direct definition of SFA(MD) (instead of
  as an instance of SFA) would already be a step forward.}

\ch{Possibly related to previous: What's a role of separate
  compilation in all this?  I guess that the reduction of secure
  compartmentalization to structured full abstraction also relies on
  separate compilation, right?}
\ch{So is it good intuition to think that SFA + separate compilation
  $\Rightarrow$ SC?}
\ch{Yannis?}
\yj{Not sure how separate compilation comes into play. Could it be
  that having separate compilation would imply that you also have the
  property written in the previous answer? Not obvious at all.}
\ch{I'm afraid I don't see any connection between the two.}
\fi

% \clearpage
% \section{Empty shell, to be removed soon [1.5-2 pages]}
% \label{sec:sc}

% This section presents the formal definition of secure
% compartmentalization, a new property that characterizes the attacker
% model against which we protect by breaking up applications into
% multiple mutually distrustful components running with least privilege.

% \SUBSECTION{Formal property}

% \ch{Wondering again whether it was good to split 2, or maybe
%   we should merge this back into 2}

% \ch{Another question is the level of formality we want.
%   For now went only half formal which greatly improves readability.}

% \ch{There is still the question whether we really want to use lists
%   and permutations in the paper, or just stay more abstract here and
%   talk about sets. Trying this now.}

% \ch{Still wondering about the precise shape of the statement.  The
%   version in mutdist.org is symmetric (iff), but it does not compile
%   the high-level attacker and instead starts from arbitrary low-level
%   attackers. If we wanted to stick with compiled low-level attackers
%   need to give up the boring direction of the iff (see ``Corollary that
%   only talks about compiled components'' in mutdist.org for an idea).
%   Trying to stick with compiled low-level attackers.}

\section{A Simple Instance of SCC}
\label{sec:instance}

\iffull
\feedback{Reviewer B: the long chunks of prose do not read very
  easily. Please use different typesetting for statements of
  definitions, theorems, proofs, etc, so it is easier to identify
  where they start and where they end.}
\ch{Unclear whether we want to do anything about this complainant.}
\yj{WONTFIX}
\fi

In this section, we illustrate the main ideas behind SCC with a
proof-of-concept compiler from an unsafe language with components to
an abstract machine with compartments.
We discuss key design decisions for providing secure
compartmentalization, such as cleaning register values to prevent
unintended communication between compartments.
We also explain how a compiler optimization for component-local calls
makes unwary compiled components vulnerable to realistic control-flow
hijacking attacks.
Finally, we show how to adapt a standard full abstraction proof
technique called {\em trace semantics}~\cite{PatrignaniASJCP15} to
prove SCC.

The results in this section have been proved on paper under
the assumptions explicitly mentioned in the text. 
In the following, an assumption denotes a property that we believe
is true and rely on, but that we haven't proved.
Lemmas, theorems and corollaries denote properties that we have
proved on paper, possibly relying on some of the stated assumptions.
The proof of the structured full abstraction theorem
(\autoref{thm:sfa-instance}) has also been formalized in Coq assuming
most other results in this section as axioms, with the exception of
\autoref{thm:partial-type-safety} and
\autoref{cor:separate-compiler-correctness} which are also mechanized.
While not constituting a complete machine-checked proof, this
mechanization further validates the high-level proof structure
described in this section.

\SUBSECTION{Source Language}
\label{sec:source}

We work with an unsafe source language with components,
procedures, and buffers.  A program in this language is a set
of components communicating via procedure calls.
Buffer overflows have undefined behavior and may open the door to
low-level attacks after compilation.
However, thanks to the cooperation between the low-level
compartmentalization mechanism and the compiler,
the effects of these attacks will be limited to the offending
component.

Components have statically checked interfaces that specify which
procedures they import and export.
To satisfy an interface, a component can only call external procedures
that it imports explicitly, and it must define all procedures exported
by its interface.
Thus, interfaces define privileges by preventing components from
calling non-imported procedures, and enable components to define
private procedures (that are not exported in their interfaces).
We will use the same notion of interfaces in our target abstract machine.

The syntax of expressions, given below, is that of a standard
imperative language with mutable buffers and mutually recursive
procedures.  Each component $C$ has local procedures \iffull
identified as $P \in [0, P_{num})$ \else ``$C.P$'' \fi and private
local buffers \iffull identified as $b \in [0,b_{num})$, where
$P_{num}$ and $b_{num}$ are fixed for each component\else $b$\fi{}.
Loops are encoded using recursive calls, sequencing is encoded as a
binary operation, and variables are encoded using buffers.  Procedures
take a single argument, which by convention is always passed in the
first cell of the first buffer of the callee component.  The only
first class values are integers $i$; these can be passed across
component boundaries using procedure calls and returns.  Buffers and
procedures are second class.
\[
\begin{array}{l@{}l}
  \mathit{e} \ \ \mathrel{::=}\ \  \;&
    i
    \;|\;
    e_1 \otimes e_2
    \;|\;
    \texttt{if }e\texttt{ then }e_1\texttt{ else }e_2
    \;|\;
    b\texttt{[}e\texttt{]}
    \;|\;
  \\  &
    b\texttt{[}e_1\texttt{] := }e_2
    \;|\;
    C\texttt{.}P\texttt{(}e\texttt{)}
    \;|\;
    \texttt{exit}
\end{array}
\]
where $\otimes \in \{ ; , + , - , \times, =, \leq, \ldots\}$.

We define a standard continuation-based small-step semantics that
reduces configurations $\mathit{cfg}$.  It is
deterministic for programs that don't cause undefined behavior.
\aaa{This is confusing, since source-level undefined behavior causes
  stuckness, and not nondeterminism. Especially given the discussion
  on stuckness a few paragraphs later.}\ch{+1, how about just saying:
  it is deterministic; undefined behaviors are modeled as stuckness
  as discussed below. Otherwise I would remove this from here and
  only discuss determinism later, together with undefined behaviors.}
\[
\begin{array}{l@{}l@{~~~~~}l@{}l}
  \mathit{cfg} \ \mathrel{::=}\ \;& (C, s, \sigma, K, e)
  &
    \mathit{K} \ \mathrel{::=}\ \;&
    \texttt{[]}
    \;|\;
    E \texttt{::} K
\end{array}
\]
\[
\begin{array}{l@{}l}
  \mathit{E} \ \ \mathrel{::=}\ \  \;&
    \square{} \otimes e_2
    \;|\;
    i_1 \otimes \square{}
    \;|\;
    \texttt{if }\square{}\texttt{ then }e_1\texttt{ else }e_2
    \;|\;
    \\ &
    b\texttt{[}\square{}\texttt{] := }e_2
    \;|\;
    b\texttt{[}i_1\texttt{] := }\square{}
    \;|\;
    C\texttt{.}P\texttt{(}\square{}\texttt{)}
\end{array}
\]
A configuration $(C, s, \sigma, K, e)$ represents a call in progress
within component $C$, in which $e$ is the expression being reduced and
$K$ is the continuation for this expression, up to the latest procedure
call.
Continuations are evaluation contexts, here represented as lists of
flat evaluation contexts $E$.
We denote by $s$ a global state recording the values of the local
buffers for each component.
Continuations for pending calls are stored on a call stack
$\sigma$, together with their call arguments' values and the names of
the compartments they execute in.
We omit the obvious definitions for call stacks $\sigma$ and states $s$.

Evaluation starts as a call to a fixed procedure of a fixed main
component, and completes once this call completes, or whenever the
current expression $e$ is $\texttt{exit}$.
We illustrate the small-step semantics with the three rules that deal
with procedure call evaluation.
In these rules, $\Delta$ is a mapping from procedure identifiers to
procedure bodies.

\infrule[]
  {s' = s[C',0,0 \mapsto i] \andalso \sigma' = (C,s[C,0,0],K) \texttt{::} \sigma }
  {\Delta \vdash (C,  s, \sigma, C'.P'(\square) \texttt{::} K, i) \rightarrow
   (C', s', \sigma', \texttt{[]}, \Delta[C',P'])}

\infrule[]
 {s' = s[C',0,0 \mapsto i']}
 {\Delta \vdash (C, s, (C',i',K) \texttt{::} \sigma, \texttt{[]}, i) \rightarrow (C', s', \sigma, K, i)}

\infrule[]
  {}
  {\Delta \vdash (C, s, \sigma, K, C'.P'(e)) \rightarrow (C, s, \sigma, C'.P'(\square) \texttt{::} K, e)}

As shown on the right-hand side of the first rule, a call starts with
an empty continuation and the procedure body $\Delta[C',P']$ as the
current expression.
The first cell in the first buffer of the callee compartment is
updated with the call argument, while information about the caller's
state when performing the call gets pushed on the call stack $\sigma$.
A call completes once an empty continuation is reached and the current
expression is a value, as is the case in the left-hand side
of the second rule.
In this case, the caller's state is restored from the call stack, and
execution resumes with the call result $i$ as the current expression.
The intermediate steps between the start and the end of a call reduce
the procedure body to a value, as the last rule illustrates:
Whenever $e$ is not a value, reduction deconstructs $e$ into a
subexpression $e'$ and a flat evaluation context $E$ such that
$e = E[e']$, where $E[e']$ means filling the hole $\square$ in $E$
with $e'$.
This expression $e'$ becomes the new currently reduced expression,
while $E$ gets appended on top of the current continuation $K$.
Finally, when $e$ is a value $i$ and the call has not completed
($K \neq \texttt{[]}$), the next step is chosen
based on the flat evaluation context found on top of $K$,
which gets removed from $K$.
In the left-hand side of the first rule, for example, this flat
evaluation context is $C'.P'(\square)$, for which the next chosen
step, as shown on the right-hand side, is to start a procedure call to
$C'.P'$, using $i$ as the call argument.

Since undefined behaviors are allowed to
take the machine to an arbitrary {\em low-level} state, 
it wouldn't make much sense to try to make the source-language semantics
describe what can happen if an undefined point is reached.  
We therefore model them at the source level simply as
stuckness (as done for instance in CompCert~\cite{Leroy09}).
In particular, reduction gets stuck when trying to
access or update a buffer out of bounds, and 
%% \iffull
%% Programs with undefined behavior
%% are thus also deterministic; however, since our result does not apply
%% to such programs, this plays no important role in what follows.\ch{My
%%   intuition is that they can't depend on this, because the compiler
%%   won't check undefined behaviors.}
%% \fi
the type safety theorem says
that well-formed programs can only go wrong (get stuck) by reducing to an
out-of-bounds operation on a buffer.
A program is well-formed if all the used buffers are defined, all
imported components are defined, all imported external procedures are
public, and if the names of all components are unique.
Well-formedness extends straightforwardly to configurations.
%
% \ch{If we assume it but have not proved it we shouldn't call it a
%   theorem. Conjecture? Assumption? Do our other proofs actually rely
%   on this? I see it already used when talking about compiler
%   correctness.}
\ifsooner
\yj{New convention: conjecture for things that we think are true but
  do not rely on, assumption for things that we think are true but
  rely on.}\bcp{I think I'd be more comfortable with either saying ``assumption'' for
  both or else not using any keyword at all for the current Conjectures and
  just saying in full ``We haven't proved it because we don't need it below,
  but we expect that this has such-and-such property...''}
\yj{Compiler correctness as stated below implies partial progress (for
  configurations that are reachable from an initial configuration), so
  compiler correctness is enough on its own, partial progress is just
  an interesting conjecture/side remark.}
\aaa{We should defined well-formedness, or at least try
  to explain it intuitively. If the following follows from compiler
  correctness, then we could just state it as a corollary after that result.}
\fi

\begin{thm}[Partial type safety]\Coqed
\label{thm:partial-type-safety}
For any well-formed configuration $\textit{cfg} = (C, s, \sigma, K, e)$,
one of the following holds:
\begin{enumerate}[(1)]
\item $\textit{cfg}$ is a final configuration (either
$e$ is  
$\texttt{exit}$ or else it is a value and $K$ and $\sigma$ are both
empty);
\item \textit{cfg} reduces in one step to a well-formed configuration;
\item \textit{cfg} is stuck and has one of the following forms:
 \begin{enumerate}[(a)]
   \item $(C, s, \sigma, b\texttt{[}\square\texttt{]} :: K, i)$
   where $s[C,b,i]$ is undefined;
   \item $(C, s, \sigma, b\texttt{[}i\texttt{]:=}\square {::} K, i')$ where
   $s[C,b,i]$ is undefined.
 \end{enumerate}
\end{enumerate}
\end{thm}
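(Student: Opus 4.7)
The plan is to establish this theorem via case analysis on the shape of the configuration $\textit{cfg} = (C, s, \sigma, K, e)$, rather than by structural induction on $e$. The case split proceeds first on whether $e$ is $\texttt{exit}$, a value, or a compound expression; for the value case it further splits on whether $K$ is empty and, when not, on the outermost flat evaluation context $E$ at its head. The goal in each branch is to identify which of the three outcomes applies and, when a step is available, to show that well-formedness is inherited by the successor configuration.

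When $e = \texttt{exit}$ the configuration is final, so outcome (1) holds. When $e$ is a value $i$, there are three sub-cases: if $K = \texttt{[]}$ and $\sigma = \texttt{[]}$ we are final again; if $K = \texttt{[]}$ but $\sigma = (C', i', K') \texttt{::} \sigma'$ the return rule fires; and if $K = E \texttt{::} K'$ we analyze $E$. The arithmetic, conditional, assignment-index, and procedure-call contexts all admit a reduction step whose premises are immediate under well-formedness. The two exceptional cases are $E = b\texttt{[}\square\texttt{]}$ and $E = b\texttt{[}i_1\texttt{]} := \square$: a reduction is possible exactly when $s[C,b,i]$ is defined, and otherwise we are stuck in precisely the forms (3a) and (3b) listed in the theorem. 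When $e$ is a non-value compound expression, the (unique) decomposition $e = E'[e']$ applies and the configuration steps by pushing $E'$ onto $K$; when $e$ is itself a primitive redex such as $i_1 \otimes i_2$, the corresponding reduction rule fires.

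The main obstacle is crafting a well-formedness invariant strong enough to rule out every other source of stuckness and to be preserved by each reduction rule. The invariant must track, for the current component $C$, that all buffer identifiers appearing in $e$ and in every flat context inside $K$ are declared in $C$, that all procedure references $C'\texttt{.}P'$ are to procedures either local to $C$ or explicitly imported from $C'$ (so that $\Delta[C', P']$ is defined and itself well-formed for $C'$), and that every frame $(C', i', K')$ in $\sigma$ satisfies the analogous invariant for $C'$. Preservation then follows by inspecting each step rule: buffer reads and writes with in-bounds indices keep $s$ well-formed at unchanged locations; procedure calls push a new well-formed frame onto $\sigma$ and switch to a body $\Delta[C',P']$ that, by well-formedness of the program, only mentions buffers and imports of $C'$; returns restore the caller's frame, preserved intact on the stack; and context-decomposition steps merely relocate well-formed sub-expressions between $e$ and $K$. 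With this invariant in hand, the only way a well-formed configuration can fail to step is an out-of-bounds buffer access, matching forms (3a) and (3b). The argument is largely mechanical but tedious across the many evaluation-context cases, which is why the authors delegate the final check to the Coq mechanization flagged in the statement.
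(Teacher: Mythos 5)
Your case analysis is correct and is essentially the argument the paper relies on: the paper gives no prose proof for this theorem, delegating it entirely to the Coq mechanization, which carries out exactly this kind of exhaustive progress-plus-preservation case split on the configuration shape with well-formedness as the preserved invariant. The only cosmetic quibble is that in the continuation-based semantics a redex like $i_1 \otimes i_2$ is handled by first decomposing and then dispatching on the flat context at the head of $K$, rather than by a separate primitive-redex rule, but this does not affect the soundness of your argument.
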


In the following, we use the term \emph{undefined behavior configurations}
for the
configurations described in (3), and we say that a well-formed program
is \emph{defined} if reducing it never reaches an undefined behavior
configuration.\ch{This now seems a bit out of sync with claims in section
  2 that we model undefined behavior as stuckness. No big deal though,
  since for well-defined programs the two agree.}

%% \yj{A partial program is defined as ...
%% Two partial programs P and Q can only have the same shape if ...}
%% \yj{Would these definitions be useful?}
%% \aaa{Yes, but maybe closer to where the theorem is
%% stated.}\yj{Added shapes, no space for partial programs}

\iffull
\ch{My impression is that the following 2 paragraphs don't really
  belong here. We might search for a place to move them.}
\yj{OK, TODO.}

When compiling this language to the compartmentalized abstract
machine, full abstraction for unsafe languages
(\autoref{defn:lfa}) would give weak guarantees due
to the open-world assumption on contexts.
Indeed, the high-level attacker could choose to import all public
methods or even take a completely different shape (e.g., with more
components) as long as it remains compatible with the program.
\aaa{Maybe mention ``privilege'' here? Something like ``e.g., the
  attacker may exert more privilege than the original source component
  had access to.''}

SCC, however, allows a reasoning model in
which the compromised components are replaced by arbitrary
components that satisfy the interface of the original component.
In particular, if access to public methods characterizes the level of
privilege of a component, this means that low-level attackers have
the same privilege as the component(s) that they successfully
compromised.
\fi

\SUBSECTION{Target}
\label{sec:target}

Our compiler targets a RISC-based abstract machine extended with a
compartmentalization mechanism, inspired by a similar design featured
in previous work~\cite{micropolicies2015}. Each compartment in this
target has its own private memory, which cannot be directly accessed
by others via loads, stores, or jumps. Instead, compartments must
communicate using special call and return instructions, which, as
explained below, include checks to ensure that the communication
respects compartment interfaces. (Note that this scheme requires a
protected call stack, which, in a real system, could be implemented
\EG using a shadow call stack~\cite{ErlingssonAVBN06, AbadiBEL09} or
return capabilities~\cite{JuglaretHAPST15}.)
Because resource exhaustion and integer overflow issues are orthogonal
to our present concerns, we assume that words are unbounded and
memory is infinite.

The instruction set for our machine is mostly standard.
\[
  \begin{array}{l@{}l}
  \mathit{instr} \ \ \mathrel{::=}\ \ \;&
    \con{Nop}
    \;|\;
    \con{Const}~i\rightarrow{}r_d
    \;|\;
    \con{Mov}~r_s\rightarrow{}r_d
  \\ &
    \;|\;
    \con{Load}~^{*}r_p\rightarrow{}r_d
    \;|\;
    \con{Store}~^{*}r_p\leftarrow{}r_s
  \\ &
    \;|\;
    \con{Jump}~r
    \;|\;
    \con{Jal}~r
    \;|\;
    \con{Call}~C~P
    \;|\;
    \con{Return}
  \\ &
    \;|\;
    \con{Binop}~r_1\otimes{}r_2\rightarrow{}r_d
    \;|\;
    \con{Bnz}~r~i
    \;|\;
    \con{Halt}
  \end{array}
\]
$\con{Const}~i\rightarrow{}r_d$ puts an immediate value $i$ into register $r_d$.
$\con{Mov}~r_s\rightarrow{}r_d$ copies the value in $r_s$ into $r_d$.
$\con{Load}~^{*}r_p\rightarrow{}r_d$ and
$\con{Store}~^{*}r_p\leftarrow{}r_s$ operate on the memory location whose
address is stored in $r_p$ (the $*$ in the syntax of \con{Load} and
\con{Store} indicates that a pointer dereference is taking place), either
copying the value found at this 
location to $r_d$ or overwriting the location with the content of $r_s$.
$\con{Jump}~r$ redirects control flow to the address stored in $r$.
$\con{Jal}~r$ (jump-and-link) does the same but also communicates a
return address in a dedicated register $r_{ra}$, so that the target
code can later resume execution at the location that followed the
$\con{Jal}$ instruction.
$\con{Call}~C~P$ transfers control to compartment $C$ at the entry
point for procedure ``$C.P$''.
$\con{Return}~C~P$ transfers control back to the compartment that
called the current compartment.
$\con{Binop}~r_1\otimes{}r_2\rightarrow{}r_d$ performs the
mathematical operation $\otimes$ on the values in $r_1$ and $r_2$ and
writes the result to $r_d$.
Finally, $\con{Bnz}~r~i$ (branch-non-zero) is a conditional branch
to an offset $i$, which is relative to the current program counter.
If $r$ holds anything but value zero, the branching happens, otherwise
execution simply flows to the next instruciton.

While $\con{Jal}$ is traditionally used for procedure calls and
$\con{Jump}$ for returns, in this machine they can only target the
current compartment's memory.
They are nonetheless useful for optimizing compartment-local calls, which
need no instrumentation; in a realistic setting, the instrumented
primitives $\con{Call}$ and $\con{Return}$ would likely come with
monitoring overhead.

% \ch{I guess also just for
%     non-call/return control flow, like if-then-else}\yj{if-then-else,
%     while, etc. would rather use Bnz instead.}

In the low-level semantics, we represent machine states $\ii{state}$
as
$(C,\sigma,\ii{mem},\ii{reg},\ii{pc})$ where $C$ is the currently
executing compartment, $\ii{mem}$ is a partitioned memory, $\ii{reg}$
is a register file, $\ii{pc}$ is the program counter,
and $\sigma$ is a global protected call stack.
We assume a partial function $\ii{decode}$ from words to
instructions.
\ifsooner
\bcp{$K$ and $E$ have already been used, in earlier sections.  It's a shame
  to reuse them here, but if you do have to reuse them please at least say
  that you are doing so and say what they mean now.  $K$ is a machine state,
I guess.}\yj{Renamed $K$.}
\fi
We write $\psi; E \vdash \ii{state} \to \ii{state'}$ to mean
that $\ii{state}$ reduces to $\ii{state'}$ in
an environment where component interfaces are given by $\psi$ and
component entry points by $E$.
Here are the reduction rules for $\con{Call}$ and $\con{Return}$:
\infrule[]
  {\ii{mem}[C,\ii{pc}] = i \andalso
  \ii{decode}~i = \con{Call}~C'~P' \andalso
  \ii{pc}' = E[C'][P']
    \\
   C' = C ~ \vee ~ C'.P' \in \psi[C].\ii{import} \andalso
   \sigma' = (C,\ii{pc}\mathord{+}1)~\con{::}~\sigma
   }{\psi; E \vdash (C,\sigma,\ii{mem},\ii{reg},\ii{pc}) \to
                    (C',\sigma',\ii{mem},\ii{reg},\ii{pc}')}
\infrule[]
  {\ii{mem}[C,\ii{pc}] = i \andalso
   \ii{decode}~i = \con{Return} \andalso
   \sigma = (C',\ii{pc}')~\con{::}~\sigma'
  }{\psi; E \vdash (C,\sigma,\ii{mem},\ii{reg},\ii{pc})
    \to (C',\sigma',\ii{mem},\ii{reg},\ii{pc}')}
The $\con{Call}$ rule checks that the call is valid with
respect to the current compartment's interface---i.e., the target
procedure is imported by the current compartment---which
ensures that even if a compiled component is compromised
it cannot exceed its static privilege level.
Then it puts the calling compartment's name and program
counter on the global protected call stack $\sigma$.
Finally, it redirects control to
the entry point of the called procedure.
The $\con{Return}$ instruction retrieves the caller's compartment and
return address from the protected call stack and resumes execution there.

\SUBSECTION{Compiler}
\label{sec:compiler}

We next define  a simple compiler that produces one low-level memory
compartment for each high-level component.
Each compartment is internally split into buffers,
the code of procedures, and a local stack that can grow infinitely.
The local stack is used to store both intermediate results and return
addresses. 
\iffull
\bcp{Is this next sentence adding anything?}To compile procedures, we use a
direct mapping from high-level expressions to low-level code.
\fi

In standard calling conventions, the callee is generally expected to restore
the register values of the caller, if it has modified them, before
returning from a call.
Here, however, compiled components cannot assume that other
components will necessarily follow an agreed calling convention, so they
must save any register that may be needed later.
This means, for instance, that we save the value of the current
call argument on the local stack and write the
local stack pointer to a fixed location in the current compartment's
memory before any cross-compartment call instruction is performed,
% \bcp{I don't follow the reasoning.  If the callee is in another compartment,
% it can't access this compartment's buffers.}\yj{Yes, that's intended!
% But the caller will be able to restore the values when he gets control
% back. Added the following:}
so that the compartment can restore them when it gets control back.

The compiler must also prevent a compromised compartment from reading
intermediate states from code in other compartments that may be in the
middle of a call to this one.
Intuitively, a secure compiler must prevent compromised compartments from
distinguishing compiled components based on low-level information that
(fully defined) high-level attackers don't get.  In the source language,
only a single argument or return value is communicated at call and return
points.  
Hence, besides preserving their values for later, the compiler
ensures that all\footnote{
  Technically speaking, we believe that, in our very simple setting,
  the compiler could choose not to clean \emph{unused} registers and
  still be secure.
  However, our proof relies on compiled components cleaning \emph{all}
  registers except the one that holds the call argument or return
  value.
  Indeed, not cleaning unused registers makes things harder because it
  can provide a covert channel for two compromised compartments
  between which interfaces would forbid any direct communication.
  These compartments could now exchange values through uncleared
  registers by interacting with the same unsuspecting uncompromised
  component.
  We conjecture that this possible cooperation between compromised
  components doesn't yield more attacker power
  in our case.
  However, in a
  \iffull
  slightly different
  \fi
  setting
  where registers could be used to transmit capabilities,
  this \emph{would} give more power to the attacker, so
  our compiler clears all registers but one, which also simplifies our
  proof.}
registers are cleaned before transferring control to other
compartments. 
%
%
% \ch{I bet this is technically accurate, but we should generally be
%   careful not to give reviewers too many stones to throw at us.}
% \yj{Is it better or worse now?}
% CH: better

The compiler implements a simple optimization for local calls.
Since all procedures of a component live in the same
address space and local calls don't need instrumentation, these calls
can be implemented more efficiently using $\con{Jal}$ and
$\con{Jump}$ instructions.
We therefore use different procedure entry points for
component-local and cross-component calls, and we skip, for
local calls, the steps that store and restore register values
and clean registers.

Because we do not check bounds when compiling buffer read and write
operations, buffer overflows can corrupt a compartment's memory in
arbitrary ways.
Consequently, many buffer overflow attacks can be reproduced even in our
simple setting,
including, due to the local-call optimization,
return-oriented programming attacks~\cite{Shacham07, Buchanan2008}.
\ifsooner
\bcp{It's probably OK because readers will understand what ROP is, but this
  sentence would not explain anything to someone who does not.}\ch{Agreed,
  maybe pick up a standard explanation from somewhere like Wikipedia}%
\fi
In return-oriented programming, an attacker overwrites return
addresses on the local stack to produce an unexpected sequence of
instructions of his choice by reusing parts of the code of
component-local procedures.
In our setting, buffer overflow attacks thus enable compiled components to
shoot themselves in the foot by storing beyond the end of a buffer and into
the local call stack.
\iffull
\bcp{Don't think we need the next sentence: the reader knows that this is
  where we are going.}
However, as we will prove (\autoref{thm:sfa-instance}), buffer
overflows can only do limited harm to other compiled components.
\fi

We assume compiler correctness as stated below for our
compiler.  %\ch{Again, can't call it a lemma unless it's been
%  proved. Assumption? In this case definition could maybe also do?}
%
\iffull
Note that, in the presence of partial type safety
(\autoref{thm:partial-type-safety}),
the determinism of our source language implies that a defined source
program either terminates or diverges.
Similarly, because we equate stuckness and termination in our
deterministic abstract machine, target programs also either terminate
or diverge.
As a consequence, proving either (1) or (2) below is enough to get the
other.
\else{}
Note that, in the presence of partial type safety,
(\autoref{thm:partial-type-safety}), proving either (1)
or (2) below is enough to get the other.
\fi
\ch{Might want to make it explicit that this definition of compiler
  correctness relies on the determinism of both source and target, right?}

%% \yj{If we are formal enough to have an explicit notion of complete
%%   programs, then update lemma below.}
%% \ch{In section 3 we're not}

\begin{assumption}[Whole-program compiler correctness]\label{assumption:compiler-correctness}
\[
\begin{array}{l}
\forall P.~ P\text{ defined} \Rightarrow \\
   \text{~~(1) }\TERM{P} \iff \TERM{\comp{P}}~\wedge \\
   \text{~~(2) }\DIV{P} \iff \DIV{\comp{P}}
\end{array}
\]
\end{assumption}

\SUBSECTION{Instantiating structured full abstraction}
\label{sec:instance-defs}

We define program shapes, partial programs, and contexts in a similar
way to \autoref{thm:sfa-to-sc}, as detailed below.
More precisely, we use isomorphic definitions so that we can later
apply this theorem.

A program shape $s$ is the pairing of a mapping from component names
to component interfaces and a set that indicates uncompromised
components.
In the rest of the paper, we implicitly restrict our attention
to \emph{well-formed} shapes.
A shape is well-formed when (1) all component interfaces in the shape
only import procedures from components that are part of the shape, and
(2) these procedures are exported according to the shape.

High-level partial programs $P$ and contexts $A$ are defined as
mappings from component names to component definitions.
A high-level partial program $P$ has shape ${\bullet}s$ when it
defines exactly the components that are marked as \emph{uncompromised}
in $s$, with definitions that satisfy the corresponding interfaces,
and when it moreover satisfies the simple well-formedness condition
that all the local buffers it uses are defined.
A high-level context $A$ has shape ${\circ}s$ under the same
conditions, adapted for \emph{compromised} components instead of
uncompromised ones.

A low-level partial program $p$ or context $a$ is formed by pairing a
partitioned memory with a mapping from procedure identifiers to entry
points.
This choice is isomorphic to having sets of named compartment
memories with entry point annotations.
A low-level partial program $p$ has shape ${\bullet}s$ when the
partitioned memory has partitions under exactly the component names
that are marked as \emph{uncompromised} in $s$, and the entry point
mapping provides addresses for exactly the procedures that are
exported by these components according to $s$.
A low-level context $a$ has shape ${\circ}s$ under the same
conditions, adapted for \emph{compromised} components instead of
uncompromised ones.

We say that a high-level partial program $P \psh{} s$ is fully defined
with respect to contexts of shape ${\circ}s$ when it cannot be blamed
for undefined behavior when interacting with such contexts:
for every $A \ash{} s$, either reducing $A[P]$ never reaches an
undefined behavior configuration, or else the current component in
this undefined behavior configuration belongs to $A$.
Similarly, a high-level context $A \psh{} s$ is fully defined
with respect to programs of shape ${\circ}s$ when it cannot be blamed
for undefined behavior when interacting with such programs.

Because we perform a point-wise compilation of high-level programs,
separate compilation (\autoref{defn:separate-compilation}) trivially
holds for our compiler.
Combining it with whole-program compiler correctness
(\autoref{assumption:compiler-correctness}) immediately leads to the
following corollary:

\begin{cor}[Separate compilation correctness]\Coqed
\label{cor:separate-compiler-correctness}
\[
\begin{array}{l}
\forall s, A \ash{}s, P \psh{}s. \\
~~ P\text{ fully defined wrt. contexts of shape }{\circ}{s} \Rightarrow \\
~~ A\text{ fully defined wrt. programs of shape
}{\bullet}{s} \Rightarrow \\
\text{~~~~(1) }\TERM{A[P]} \iff \TERM{\comp{A}[\comp{P}]}~\wedge \\
\text{~~~~(2) }\DIV{A[P]} \iff \DIV{\comp{A}[\comp{P}]}
\end{array}
\]
\end{cor}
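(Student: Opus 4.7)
The plan is to chain together two facts: (i) whole-program compiler correctness (\autoref{assumption:compiler-correctness}) applied to $A[P]$, and (ii) the separate-compilation equation $\comp{(A[P])} \eql \comp{A}[\comp{P}]$ coming from \autoref{defn:separate-compilation}, which does hold for our pointwise compiler because compilation and set-union linking trivially commute. The whole argument is very short once we have established the key intermediate fact that $A[P]$ is a \emph{defined} whole program, so that \autoref{assumption:compiler-correctness} applies to it.

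The main (and only nontrivial) step is precisely this full-definedness of $A[P]$. Since $A \ash{} s$ and $P \psh{} s$, the linked configuration $A[P]$ is a well-formed complete program: the shape requirements ensure that the components supplied by $A$ and by $P$ together cover every interface in $s$, that every imported procedure is defined somewhere, and that buffer declarations are consistent. Suppose, for contradiction, that reducing $A[P]$ reaches an undefined behavior configuration; by partial type safety (\autoref{thm:partial-type-safety}) this configuration has the form $(C, s, \sigma, b[\square]::K, i)$ or $(C, s, \sigma, b[i]{:=}\square :: K, i')$ with $s[C,b,i]$ undefined, so the currently executing component $C$ is uniquely determined. Either $C$ belongs to $A$, which contradicts full-definedness of $A$ with respect to programs of shape ${\bullet}s$ (taking $P$ itself as the witness program), or $C$ belongs to $P$, which contradicts full-definedness of $P$ with respect to contexts of shape ${\circ}s$ (taking $A$ as the witness context). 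Either way we reach a contradiction, so $A[P]$ is defined.

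Once $A[P]$ is known to be defined, \autoref{assumption:compiler-correctness} gives both $\TERM{A[P]} \iff \TERM{\comp{A[P]}}$ and $\DIV{A[P]} \iff \DIV{\comp{A[P]}}$. Separate compilation then yields $\comp{A[P]} \eql \comp{A}[\comp{P}]$, and low-level behavioral equivalence preserves termination and divergence (these are exactly the observable outcomes that distinguishability is built upon in our deterministic target), so we may substitute to obtain the two biconditionals in the corollary's conclusion.

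I expect the step about $A[P]$ being defined to be the only place that needs care, and in the Coq development it essentially reduces to a case analysis on which side of the linking the blamed component comes from. Everything else is bookkeeping: checking that the shape side conditions imply the hypotheses of \autoref{assumption:compiler-correctness}, and that $\eql$ respects the $\TERM{-}$ and $\DIV{-}$ predicates. The \Coqed marker on the statement indicates that this proof is machine-checked as part of the supplemental Coq material, alongside \autoref{thm:partial-type-safety}.
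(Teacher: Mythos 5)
Your proposal is correct and follows essentially the same route as the paper, which justifies the corollary in one line as the combination of (trivially satisfied, pointwise) separate compilation with whole-program compiler correctness; your blame-based case analysis showing that $A[P]$ is defined is exactly the implicit step needed to discharge the hypothesis of \autoref{assumption:compiler-correctness}.
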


\SUBSECTION{Proof technique for structured full abstraction}
\label{sec:proof-outline}

Trace semantics were initially proposed by Jeffrey and
Rathke~\cite{JeffreyR05b,JeffreyR05} to define fully abstract models
for high-level languages.
Patrignani~\ETAL later showed
how to use trace semantics~\cite{PatrignaniC15} to prove full
abstraction for a compiler targeting machine
code~\cite{PatrignaniASJCP15}.

This proof technique is well suited for deterministic target languages
such as machine code and proceeds in two steps.
First, we devise a trace semantics for low-level partial programs and
contexts and relate it to the target machine's operational semantics (\EG by
proving it fully abstract~\cite{PatrignaniC15}).
This trace semantics will provide a set of traces for every partial
program, describing all the execution paths that this program
can reach by interacting with an arbitrary context.
Second, we use the trace semantics to characterize the interaction
between an arbitrary low-level context and, separately, two compiled
programs that this context distinguishes, resulting in two traces with a
common prefix followed by different actions.
We can then use these traces to construct a high-level attacker,
proving that this attacker distinguishes between the two source programs.

As our proof demonstrates, proving the trace semantics fully abstract
is not a mandatory first step in the technique.
Instead, we relate our trace semantics to the operational one
using two weaker trace composition and decomposition conditions
(\autoref{lemma:trace-decomp}
and \autoref{lemma:trace-comp}),
adapted from the key lemmas that Jeffrey and Rathke used to
prove their trace semantics fully
abstract~\cite{JeffreyR05b,JeffreyR05}.
This reduces proof effort, since proving a trace semantics fully
abstract typically requires proving a third lemma with a
trace-mapping argument of its
own~\cite{PatrignaniC15,JeffreyR05b,JeffreyR05}.
% \bcp{And also because we
%   can't prove FA anyway, here!}\ch{That's a different FA}

%% Then, we map a program trace $t.\gamma_0!$ ending
%% with\ch{right?}\yj{Depends which trace you're talking about; adding
%% later used notation so that it's not ambiguous. Is this OK even though
%% the notation has not been yet? Unsure.}
%% a program action $\gamma_0$ to a \emph{fully defined} attacker that
%% will reproduce trace $t$ and discriminate action $\gamma_0!$ from
%% other possible program actions $\gamma \neq \gamma_0$
%% (\autoref{assumption:definability}).
%% % \ch{What do you mean by ``with very specific traces''; can this be
%% %   rewritten to ``we map specific traces''?}\yj{No; these interesting
%% %   traces of the attacker are all those that have $t.\gamma$ as a prefix (but
%% %   different context actions after $t.\gamma_0$).
%% \yj{I have tried to reformulate,
%%   what do you think?}\ch{Not sure discriminate is the right word
%%   above, and even if it is, it probably won't be clear
%%   enough. Otherwise this is fine.}\yj{Clearer?}
%% %
%% Using various properties of traces and fully defined compiled programs
%% (\autoref{lem:canonicalization}
%%   and \autoref{lem:trace-ext}),
%% we finally show structured full abstraction (\autoref{defn:sfa}) for
%% our compiler (\autoref{thm:sfa-instance}).
%

Adapting the technique to undefined behavior is straightforward, 
essentially amounting to proving standard full abstraction for the safe
subset of the language.
Then one simply proves that the context produced by the
mapping is fully defined, thus safe.
Adapting to a closed world, however, takes more work.

The trace semantics that have been used previously to prove full abstraction
characterize the interaction between a partial program and arbitrary
contexts.
The context's shape is typically constructed as reduction goes,
based on the steps that the context takes in the generated trace.
For instance, if the trace said that the context performs a call, then
the target procedure would be appended to the context's interface so
that this call becomes possible.
For structured full abstraction, we want a finer-grained trace
semantics that enables reasoning about the interaction with
contexts of a specific shape.
We achieve this by making the shape a parameter to the reduction
relation underlying our trace semantics.
To make sure that traces are compatible with this shape, we also
keep track of the current compartment during reduction.
This allows us to generate only context steps that adhere to the
current compartment's interface, and hence to the context's shape.
In particular, the context will only be able to call program
procedures for which
(1) there is a context compartment whose interface explicitly imports
the target procedure, thus granting the privilege to call that
procedure,
and (2) this other context compartment is reachable from the current
compartment via a chain of cross-compartment calls or returns within
the context.

Moving to a closed world also makes the trace mapping argument harder.
The one from Patrignani~\ETAL~\cite{PatrignaniASJCP15},
for instance, relies on changes in the context's shape,
\EG adding a helper component to the context that is not present in
the low level.
This is no longer possible for structured full abstraction,
where the context shape is fixed.

\ifsooner
\ch{Should explain again what your shapes are (probably the same as in
  Section 3 instance)}
\yj{OK, TODO.}
\ch{We reuse \autoref{thm:sfa-to-sc}, given that we have an isomorphic
notion of shape: a pair of a map $\psi$, that associates
component names to interfaces, and a set which contains the names of
the compromised components.}
\fi

\SUBSECTION{Trace semantics for the low-level language}
\label{sec:trace-semantics}

We define a trace semantics in which traces are finite words over an
alphabet $E\alpha$\ifsooner\bcp{It's cute, but because of the change
  of alphabet, I first tried to parse this as $E$ applied to $\alpha$.
  Change to $\ii{Ea}$?}\fi{}
of external actions, alternating between program external
actions ``$\gamma!$'' and context external actions ``$\gamma?$''.
We treat external actions as moves in a two-player game, viewing
the context and the partial program as the players.
The trace semantics is parameterized by a shape $s$, which the two
players have.
External actions either transfer control to the other player or end
the game.
\[
  \begin{array}{r@{}l@{~~~~~~~}r@{}l}
  E\alpha\  \mathrel{::=}\; \ \;&
    \gamma!
    \;|\;
    \gamma?
  &
  \gamma \ \mathrel{::=}\; \ \;&
    \con{Call}_{\ii{reg}}~C~P
    \;|\;
    \con{Return}_{\ii{reg}}
    \;|\;
    \checkmark
  \end{array}
\]
%
% \ch{All the parts of this grammar should be explained somewhere,
%   including the Call and Return and why they have the reg subscript;
%   maybe it's too early for that, but then give a fwd reference to
%   where it is explained}\yj{Updated, see below. I don't think it's too
%   early, is it?}
%
% \ch{I've put things containing ? and ! in double quotes since
%   otherwise this fake punctuation makes the text hard to read. I guess
%   another alternative would be to put the punctuation in sub or
%   super-script.}
%
Traces ($E\alpha^*$) track the external actions ($\gamma$)
performed by the context and the program.
The first kind of external action is cross-boundary communication,
which corresponds to the use of instrumented call instructions
$\con{Call}~C~P$ and $\con{Return}$ when they transfer control to
a compartment that belongs to the opponent.
For these external actions, traces keep track of the instruction used
together with $\ii{reg}$, the values held by all registers when the
instruction is issued.
The second kind of external action is program termination,
which we denote with a tick $\checkmark$ and which the opponent cannot
answer ($\checkmark$ ends the game).
It corresponds to the use of an instruction that makes execution
stuck, such as $\con{Halt}$.

At any point where it has control, a player can take internal actions
(any instruction that
neither terminates execution nor transfers control to the opponent);  
these are not reflected in the trace.
In particular, cross-compartment communication is considered an
{\em internal} action when it transfers control to a compartment that
belongs to the current player.
Besides halting, a player can also end the game by triggering an infinite
sequence of internal actions, making execution diverge.
In the trace, this will correspond to not making any move:
the trace observed thus far will be a maximal trace for the
interaction between the program and context involved, \IE any
extension of this trace will not be shared by both the program and the
context.

Intuitively, a program $p \psh{} s$ has trace $t$ if it answers with the
{program} actions described in $t$ when facing a context $a \ash{}s$ that
produces the {context} actions described in $t$.
\ifsooner
We make no formal distinction between partial programs and contexts,
so asking that $a$ produces the context actions described in
$t$ is the same as asking that $a$ (seen as a partial program)
produces the program actions in $t^{-1}$ when facing $p$, where
$t^{-1}$ is obtained by swapping ``!'' and ``?'' in $t$.
\fi
% The intuition is that, from $a$'s point of view, $a$ is the program
% and $p$ is the context.
% BCP: Not necessary, I think
Similarly, a program $a \ash{}s$ has trace $t$ if it answers with the
{context} actions described in $t$ when facing a program $p \psh{}s$
that produces the {program} actions described in $t$.
We define $\ptr{s}{p}$ to be the set of traces of a partial program
$p$ with respect to contexts of shape ${\circ}s$, and $\atr{s}{a}$
to be the set of traces of a context $a$ with respect to programs of
shape ${\bullet}s$.
\ifsooner
We use this definition to talk about the traces of a context:
context $a$ produces the context actions in trace $t$ when facing a
program $p$ that produces the program actions in $t$ iff
$t^{-1} \in \atr{s}{a}$.
\fi

The player that starts the game is the one that owns the main
component according to $s$.
For each player, the trace semantics is deterministic with respect to
its own actions and nondeterministic with respect to the opponent's
actions.
All possible actions an actual opponent could take have a
corresponding nondeterministic choice, which is formalized by a
property we call trace extensibility.

\begin{lemma}[Trace extensibility]
\label{lem:trace-ext}
\[
\begin{array}{l}
\forall t,\; s,\; p \psh{}s,\; a \ash{}s. \\
  ~~ ( t \in \ptr{s}{p} \wedge t.\gamma? \in \atr{s}{a} \Rightarrow
       t.\gamma? \in \ptr{s}{p} ) ~ \wedge \\
  ~~ ( t \in \atr{s}{a} \wedge t.\gamma! \in \ptr{s}{p} \Rightarrow
       t.\gamma! \in \atr{s}{a} )
\end{array}
\]
\end{lemma}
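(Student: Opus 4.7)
\noindent
My plan is to prove each implication by case analysis on the shape of $\gamma$ (a $\con{Call}$, a $\con{Return}$, or $\checkmark$), relying on the fact that the trace semantics is deliberately \emph{opponent-blind}: whenever it is the opponent's turn, the rules admit any action consistent with the shape $s$, the current compartment, and the current call stack, irrespective of what actual opponent code is present. Since this lemma is exactly the statement that the nondeterministic opponent-action rules cover everything a real opponent could do, the proof essentially consists in exposing this design choice.

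Concretely, for the first claim, I would start by unfolding $t \in \ptr{s}{p}$ to a derivation in the program-side trace semantics that consumes $t$ and leaves the game in a \emph{control state} $K_p$ in which the opponent holds control (since the last letter of $t$ alternated parity or, if $t$ is empty, since $s$ assigns the main component to the context). I would prove, by induction on $t$, an invariant that any program-side derivation for $t$ and any context-side derivation for $t$ agree on the ``shared'' pieces of the configuration that the trace-semantics rules consult when generating opponent actions: the shape $s$, the name of the currently executing compartment, and the protected call stack (these are determined by $t$ and $s$ alone, because the opponent's internal state does not influence its externally observable moves). From the hypothesis $t.\gamma? \in \atr{s}{a}$ I get that the context $a$ really does produce $\gamma?$ from a context-side state agreeing on these shared pieces; in particular $\gamma?$ satisfies the interface-level side conditions (for a $\con{Call}~C~P$, that $C$ is a context component and $C.P$ is imported by the current compartment; for a $\con{Return}$, that the top of the protected call stack points into a program compartment; and $\checkmark$ has no side condition). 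By the invariant, those same side conditions hold at $K_p$, so the program-side opponent rule fires and yields a derivation for $t.\gamma?$, giving $t.\gamma? \in \ptr{s}{p}$. The second claim is proved symmetrically, swapping the roles of program and context.

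The substantive work, and the main obstacle I expect, is nailing down that ``shared pieces'' invariant precisely enough that it matches how the opponent-action rules are written. In particular, I need to show that the reachability condition used in \autoref{sec:proof-outline} (a context can only call a program procedure if control can reach, via context-internal cross-compartment calls and returns, a context compartment whose interface imports that procedure) is determined entirely by the trace $t$ together with the shape $s$, not by the code of $p$ or $a$. This amounts to checking that every opponent rule reads only from the shared fields of the configuration, and that every player rule preserves those shared fields in the way the opponent rules expect; once set up, each case of $\gamma$ is a short verification. No external lemma is required, and in particular nothing from the structured full abstraction proof downstream is used — trace extensibility is purely a well-formedness property of the trace semantics itself.
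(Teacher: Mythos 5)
The paper does not actually spell out a proof of this lemma in the text---it is one of the results ``proved on paper'' in the supplemental material, and the only in-text justification is the design remark that ``all possible actions an actual opponent could take have a corresponding nondeterministic choice.'' Your proof is precisely an unfolding of that remark, so the overall approach is the intended one. There is, however, one imprecision in your invariant that would make the proof fail as written: you claim that the name of the currently executing compartment is determined by $t$ and $s$ alone. For the \emph{external} traces this is false. As explained in \autoref{sec:trace-semantics}, the semantics is really defined over finer-grained traces that record internal cross-compartment calls and returns within a single player, and the external traces are obtained by erasing those internal actions. After erasure, an opponent holding control may be sitting in any of several of its own compartments depending on which internal moves it made, and none of those moves are visible in $t$; this is exactly why the reachability condition you cite is phrased in terms of chains of context-internal calls and returns. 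The correct invariant is not that the shared state is a \emph{function} of $t$ and $s$, but that the \emph{set} of opponent configurations (current compartment together with the protected call stack) reachable after $t$ is the same in any program-side and any context-side derivation, because the program-side semantics nondeterministically replays every internal opponent move permitted by $s$. Concretely, you should unfold the hypothesis $t.\gamma? \in \atr{s}{a}$ to a finer-grained context-side trace, replay its internal opponent actions move for move as nondeterministic choices in the program-side derivation of $t$, and only then transfer the interface and call-stack side conditions for $\gamma?$ as you describe. With the invariant restated at the level of finer-grained traces, your case analysis on $\gamma$ goes through and the second conjunct is indeed symmetric.
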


\ifsooner
\iffull
Note that because there is no formal distinction between partial
program and context, the same property holds when swapping the
roles of the program and the context.
\fi
\fi
Nondeterminism disappears once we choose a particular opponent for
a player, as the two key lemmas below illustrate.

\begin{lemma}[Trace decomposition]
\label{lemma:trace-decomp}
\[
\begin{array}{l}
  \forall s,\; p \psh{}s,\; a \ash{}s.~~
  \TERM{a[p]} \Rightarrow \\
  ~~\exists t.~~ \TTERM{t} \wedge t \in \ptr{s}{p} \cap \atr{s}{a}
\end{array}
\]
\end{lemma}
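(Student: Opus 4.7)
The plan is to prove trace decomposition by extracting the required trace directly from the terminating low-level execution of $a[p]$ and then showing that, because players only observe each other through the boundary actions, both $p$ and $a$ individually admit this trace against the appropriate class of opponents.

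First, I would fix a terminating reduction sequence $\ii{state}_0 \to \ii{state}_1 \to \cdots \to \ii{state}_n$ of $a[p]$, where $\ii{state}_n$ is a stuck final state (\EG a $\con{Halt}$). I would partition this sequence into maximal \emph{epochs} according to who owns the currently executing compartment: each epoch is a contiguous block of steps during which the current compartment belongs to either $p$ or $a$, with epoch boundaries occurring exactly at executions of $\con{Call}$ or $\con{Return}$ whose target compartment belongs to the opposite player. Then I would define the extracted trace $t$ by emitting one external action per epoch boundary---$\con{Call}_{\ii{reg}}~C~P$ or $\con{Return}_{\ii{reg}}$, polarized as $!$ if the issuing player is $p$ and $?$ if the issuing player is $a$---followed by a final $\checkmark!$ or $\checkmark?$ for the terminating step depending on the owner of the last epoch. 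Cross-compartment communication that stays within one player's components, as well as all other instructions, generates no action, matching the definition of internal actions in \autoref{sec:trace-semantics}.

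Next, I would show $t \in \ptr{s}{p}$. The key invariant is that, during any program epoch, the reduction of the operational semantics depends only on $p$'s partitioned memory, the register file at epoch entry, the shared call stack entries owned by $p$, and the program counter---all of which are fixed by $p$ and by the sequence of $\gamma?$ actions previously emitted by the context. Since the $\con{Call}$ checks in the target semantics were satisfied by $a$ in the original run, every boundary action in $t$ is compatible with $p$'s shape ${\bullet}s$ and with \emph{some} context of shape ${\circ}s$; hence each boundary action corresponds to an admissible nondeterministic choice in the trace semantics for $p$. An induction on the number of epochs, using that $p$'s internal reductions are deterministic and that trace extensibility (\autoref{lem:trace-ext}) allows appending any context-move that an actual context produced, shows $t \in \ptr{s}{p}$. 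A symmetric argument gives $t \in \atr{s}{a}$. Finally, since $\ii{state}_n$ is stuck and $a[p]$ terminates, the last action extracted is $\checkmark$, so $\TTERM{t}$ holds.

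The main obstacle I anticipate is verifying that each boundary action is legal in \emph{both} players' shape-parameterized trace semantics simultaneously. When the program emits $\con{Call}_{\ii{reg}}~C~P$ as a $\gamma!$, the trace semantics for $a$ requires that the target procedure $C.P$ is reachable from the currently executing context compartment via a chain of cross-compartment calls and returns respecting the interfaces in ${\circ}s$. This reachability is exactly what was witnessed by the original execution of $a[p]$ reaching the corresponding state, because the low-level $\con{Call}$ rule enforces import checks and thus each epoch entry point was reached legitimately. Making this precise requires a careful structural invariant on the protected call stack $\sigma$ relating it to the unplayed prefix of $t$ and to the ownership pattern of the epochs; once established, the verification reduces to bookkeeping. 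The termination case ($\checkmark$) is then immediate because neither player needs to answer, matching the definition of $\TTERM{t}$.
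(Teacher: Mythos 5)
Your proposal is correct and follows essentially the same route as the paper, which does not print a detailed proof but describes trace decomposition exactly as you do: synthesizing the trace from the terminating reduction of $a[p]$, emitting external actions only at cross-player $\con{Call}$/$\con{Return}$ boundaries and at the final halting step, and establishing membership in both $\ptr{s}{p}$ and $\atr{s}{a}$ via the correspondence between epochs of the operational semantics and the shape-parameterized trace semantics (with the call-stack/ownership invariant you identify as the main bookkeeping burden). The only presentational caution is that the inductions for $p$ and $a$ should be carried out simultaneously, since trace extensibility's hypothesis for extending $p$'s trace by a context move is precisely that $a$ already has the extended trace, and vice versa.
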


Trace decomposition is stated for terminating programs.
It extracts the interaction\ch{an interaction? unique or not?}
between a program $p$ and a context $a$
with dual shapes by looking at how $a[p]$ reduces, synthesizing that
interaction into a trace $t$.
Because execution terminates, this trace ends with a
termination marker.

\begin{lemma}[Trace composition]
\label{lemma:trace-comp}
\[
\begin{array}{l}
  \forall t,\; s,\; p \psh{}s,~ a \ash{}s.~~
  t \in \ptr{s}{p} \cap \atr{s}{a} \Rightarrow \\
  ~~ ( \forall E\alpha.~(t.E\alpha) \not\in \ptr{s}{p} \cap \atr{s}{a} ) \Rightarrow \\
  ~~~~~(\TERM{a[p]} \iff \TTERM{t})
\end{array}
\]
\bcp{Is $E\alpha$ a variable ranging over $E\alpha$??}
\ifsooner\bcp{How about $\cdot$ instead of . for trace append?}
  \ch{too late for this, maybe later}\fi
\end{lemma}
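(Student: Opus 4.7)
The plan is to prove the two implications of the biconditional separately. Both directions hinge on a coupling between the actual execution of $a[p]$ under the operational semantics and the shared trace $t$, exploiting the fact that each player is deterministic with respect to its own actions.

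For the forward direction, I would assume $\TERM{a[p]}$ and apply Trace decomposition (\autoref{lemma:trace-decomp}) to obtain some $t^{\ast} \in \ptr{s}{p} \cap \atr{s}{a}$ with $\TTERM{t^{\ast}}$. The key sublemma I need is that any two shared traces of $p$ and $a$ are prefix-comparable. I would prove this by induction on their common prefix: at each step, the player whose turn it is (determined by the last action) has its next action uniquely fixed by its own trace semantics, so the two traces cannot branch. Given prefix comparability together with the prefix-closedness of trace sets, $t$ being a strict prefix of $t^{\ast}$ is ruled out by the maximality hypothesis (the action of $t^{\ast}$ just after $t$ would yield a shared extension of $t$ lying in $\ptr{s}{p}\cap\atr{s}{a}$), while $t^{\ast}$ being a strict prefix of $t$ is ruled out because $\checkmark$ ends the game and no action can follow a termination marker. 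Hence $t = t^{\ast}$, so $\TTERM{t}$.

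For the backward direction, I would assume $\TTERM{t}$ and argue by a simulation that the execution of $a[p]$ realises exactly the trace $t$. Starting from the initial state, we replay $t$ action by action, letting the active player take internal reductions until the next external or terminating action of $t$ is produced. At each external boundary the current player's reduction is forced to match the next action of $t$ by the determinism of its own trace semantics together with the hypothesis $t \in \ptr{s}{p} \cap \atr{s}{a}$; the opponent then receives the action and becomes active, an exchange justified by Trace extensibility (\autoref{lem:trace-ext}). When the run reaches the final $\checkmark$ of $t$, the corresponding player halts, so $a[p]$ terminates.

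The main obstacle will be the coupling argument shared by both directions: showing that the trace semantics of a single player, which nondeterministically quantifies over the opponent's moves, collapses to a deterministic simulation when paired with a concrete opponent. This relies on Trace extensibility to rule out mismatched turns, on determinism of each player's own actions, and on determinism of the target operational semantics. A subtle sub-case is when a player would diverge via an unbounded chain of internal actions between two consecutive external actions of $t$: the hypothesis $t \in \ptr{s}{p}$ entails that $p$ does eventually issue the next program action of $t$, so such internal divergence cannot actually occur along the replay of $t$, and the simulation is therefore well-founded.
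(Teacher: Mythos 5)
The paper never writes out a proof of this lemma --- it only gives the surrounding intuition and notes that the statement is adapted from Jeffrey and Rathke's composition lemma, with the actual proof relegated to the (paper) supplementary material --- so there is no official argument to diff against. Your plan (forward direction via trace decomposition plus a determinism-based prefix-comparability of shared traces, with $\checkmark$-maximality ruling out strict-prefix cases; backward direction via a replay/simulation coupling the two players' trace derivations to the operational semantics of $a[p]$, using $t \in \ptr{s}{p} \cap \atr{s}{a}$ to bound internal reductions) is exactly the standard composition argument the paper's discussion describes, and it correctly identifies the coupling/simulation relation as the place where the real work (including the closed-world bookkeeping of the current compartment) lives.
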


Trace composition is the opposite of trace decomposition, 
reconstructing a sequence of reductions based on synthesized interaction
information.
It considers a program and a context with dual shapes, that share a
common trace $t$.
The condition on the second line states that the game has ended:
trace $t$ cannot be extended by any action $E\alpha$ such that the
two players share trace ``$t.E\alpha$''.
Under these assumptions, trace composition tells us that one of the
following holds:
either (1) the trace ends with a termination marker $\checkmark$ and
putting $p$ in context $a$ will produce a terminating
program,
or (2) the trace does not end in $\checkmark$
and putting $p$ in context $a$ will produce a diverging program.
Intuitively, if the game has ended but there is no termination
marker, it must be because one of the players went into an
infinite sequence of internal actions and will neither give control
back nor terminate.

While the statement of these lemmas is quite close to that used in
an open world setting~\cite{JeffreyR05b,JeffreyR05},
\ifsooner
\iffull
(for standard full abstraction, we simply wouldn't care about program
and context shapes),
\else,
\fi
\fi
the trace semantics itself has to be adapted in order to prove them
in the presence of our closed world assumption.
To this end, we incorporate \emph{internal} actions within the trace
semantics, thus adding more options to the nondeterministic
choice of the next context action, which allows us to track at any
point the currently executing compartment.
When in control, a player can only perform communicating actions
allowed by the interface of the current compartment.
\ifsooner
\bcp{The rest of the paragraph doesn't flow very well.}
\fi
This restricts external actions as required, while also making it possible
to internally switch the current compartment through allowed internal
actions.
Using our semantics, we thus end up with finer-grained traces that
include internal communication, which can be directly mapped to
high-level attackers (\autoref{assumption:definability}).
The traces we use otherwise are obtained by erasing internal actions
from the finer-grained traces.
%
%% , \ch{I'm tempted
%%   to remove this last part. You can't recover something that's simply
%%   not there, so you need to start with more than the coarser-grained
%%   trace}instead of trying to recover
%% possible internal actions out of coarser-grained traces.

\SUBSECTION{Proof of SCC}
\label{sec:md-proof}

% \ch{IMPORTANT: We prove exactly the same instance of structured full
%   abstraction as in \autoref{thm:sfa-to-sc}; otherwise that theorem
%   doesn't apply!}\ch{Changed}
%
We prove our instance of structured full abstraction,
which implies SCC by \autoref{thm:sfa-to-sc} since we have
isomorphic definitions to the ones in \autoref{sec:fa-not-enough}.

\begin{thm}[Structured full abstraction]\label{thm:sfa-instance}\Coqed\\
Our compiler satisfies structured full abstraction.
\end{thm}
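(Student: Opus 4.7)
The plan is to prove the contrapositive: assume a low-level context $a \hasshape{\circ}s$ witnesses $a[\comp{P}] \neql a[\comp{Q}]$ and build a fully defined high-level context $A \hasshape{\circ}s$ with $A[P] \neqh A[Q]$. Since both target semantics and the source semantics on defined programs are deterministic, behavioral inequivalence at either level reduces to one side terminating while the other diverges; without loss of generality, suppose $\TERM{a[\comp{P}]}$ and $\DIV{a[\comp{Q}]}$. By trace decomposition (\autoref{lemma:trace-decomp}) applied to $a[\comp{P}]$, we obtain a trace $t_P$ ending in $\checkmark$ with $t_P \in \ptr{s}{\comp{P}} \cap \atr{s}{a}$.

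Next, I would extract a trace that captures the low-level disagreement between $\comp{P}$ and $\comp{Q}$ under $a$. Using trace composition (\autoref{lemma:trace-comp}) contrapositively on $a[\comp{Q}]$, together with trace extensibility (\autoref{lem:trace-ext}), I consider the longest common prefix $t$ of $t_P$ that lies in $\ptr{s}{\comp{Q}} \cap \atr{s}{a}$. Either the next program action prescribed by $t_P$ is not taken by $\comp{Q}$ (so $t$ is maximal as a shared trace and ends without a $\checkmark$ on the $\comp{Q}$ side while $t_P$ carries the $\checkmark$ on the $\comp{P}$ side), or the two programs agree on all actions along $t_P$ and the divergence of $a[\comp{Q}]$ arises from an internal loop on the program side after $t_P$'s last context action. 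In either case I obtain a pair of distinguishing traces $t^P, t^Q$ that are shared between $a$ and the respective compiled program and that differ in whether they terminate according to trace composition.

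The core step is then definability: I invoke the trace-mapping construction (\autoref{assumption:definability}) to synthesize a high-level context $A \hasshape{\circ}s$ that is fully defined with respect to programs of shape $\bullet s$ and whose context trace (after compilation) agrees with the context half of $t^P$ and $t^Q$. Because our trace semantics is made \emph{closed-world}, the synthesized $A$ must live entirely within the compartments and interfaces fixed by $s$, which is exactly what definability delivers. Applying trace composition in the other direction yields $\TERM{\comp{A}[\comp{P}]}$ and $\DIV{\comp{A}[\comp{Q}]}$ at the target level, and then separate-compilation correctness (\autoref{cor:separate-compiler-correctness})—which applies precisely because $P$, $Q$, and $A$ are all fully defined—lifts these observations back to the source, giving $\TERM{A[P]}$ and $\DIV{A[Q]}$, hence $A[P] \neqh A[Q]$.

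The hard part is definability under the closed-world constraint: the synthesized $A$ cannot add helper components or widen interfaces, so its code must realize every context action in the trace using only the procedures imported by whichever compromised compartment is currently executing, tracking the cross-compartment call/return discipline. The trace semantics was designed with internal actions precisely to make this reachability explicit, so the mapping is guided by the finer-grained traces and then erased; still, verifying that the reconstructed $A$ faithfully reproduces every nondeterministic branch of the low-level context while remaining fully defined is the most delicate part of the argument, and is where the bulk of the mechanized and property-based-tested evidence in the supplemental materials is spent.
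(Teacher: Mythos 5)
Your overall architecture matches the paper's proof: trace decomposition on $a[\comp{P}]$, isolating the longest prefix of the terminating trace that $\comp{Q}$ shares, arguing via extensibility that the point of divergence is a program action $\gamma_1!$, invoking definability to synthesize a fully defined $A$, and concluding with trace composition plus separate compilation correctness. However, there is a genuine gap: you apply \autoref{assumption:definability} directly to the trace extracted from the arbitrary low-level context $a$, but that assumption carries the hypothesis $t = \zeta_{\circ}(t)$ --- it only produces a context for \emph{canonical} traces, in which every context action has all registers but $r_{\ii{com}}$ cleared. An arbitrary $a$ can pass arbitrary register values in its calls and returns, and a compiled fully defined $A$ is structurally incapable of reproducing such actions, so the mapping you invoke simply does not exist for the raw trace $t_p$. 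The paper bridges this with the canonicalization step you omit: it replaces $t_p$ by $t_c = \zeta_{\circ}(t_p)$ and then uses \autoref{lem:canonicalization} --- which holds precisely because fully defined compiled programs never read the registers that get cleared --- to transfer trace membership of $t_p$, $t_p.\gamma_1!$, and $t_p.\checkmark!$ between $\comp{P}$, $\comp{Q}$ and their canonical counterparts. Without this lemma you cannot conclude that the synthesized $\comp{A}$ elicits the same program actions from $\comp{P}$ and $\comp{Q}$ that $a$ did, which is the crux of showing $\TERM{\comp{A}[\comp{P}]}$ and $\DIV{\comp{A}[\comp{Q}]}$.

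A secondary, related omission: in the divergence argument for $\comp{A}[\comp{Q}]$ you must handle the case where $\comp{Q}$, after the common prefix, emits some external action $\gamma$ with $\checkmark \neq \gamma \neq \gamma_1$ rather than looping internally. The paper closes this case by observing that compiled fully defined programs emit only canonical actions ($\gamma = \zeta(\gamma)$ and $\gamma_1 = \zeta(\gamma_1)$), so clause (3) of definability forces $\comp{A}$ to answer with no action at all, hence divergence. Your sketch treats the program-side options as only ``takes $\gamma_1$'' or ``loops internally,'' so this branch is unaccounted for. Also, your second top-level case (the two programs agree on all of $t_P$ and $a[\comp{Q}]$ diverges by an internal loop afterwards) cannot occur, since $t_P$ ends in $\checkmark$ and trace composition would then force $a[\comp{Q}]$ to terminate; the paper uses exactly this to show the shared prefix is strict.
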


\ifsooner
\ch{\bf Entering the twighlight zone, where the intuition is gone}

\ch{\bf Don't understand this at all, why a bad
  starting point? Why do you care about compiled fully defined guys
  when mapping traces?}\yj{Tried to explain and reorganize. Better?}
\fi

%% Let us now see how introducing the compiler structures the traces.
%% \ch{Where exactly are you introducing the compiler? :-)
%%   and what do you mean by ``structures the traces''?}
%% This is also true when permuting roles and considering a compiled
%% fully defined context: it always clears its registers before returning
%% to the program.\ch{\bf This only holds if the context is also compiled!
%%   Why is this assumption reasonable for contexts?}\yj{That's the
%%   assumption about the context we want to map to.}

Recall that the basic idea behind the proof technique is to extract
two traces that characterize the interaction between a low-level context
and two compiled fully defined high-level programs, and then to map these
two traces to a fully defined high-level context.
The high-level context should reproduce the context actions described
in the traces when facing the same programs as the low-level context.

Unfortunately, a compiled fully defined context cannot reproduce
any arbitrary low-level trace, because the values transmitted
in registers are part of external communication actions in low-level
traces:
As enforced by the compiler, these contexts always clear all
registers but the one used for communication before giving control to
the program.
They can thus only produce traces in which registers are
cleared in all context actions, which we call \emph{canonical} traces.
We denote by $\zeta(\gamma)$ the operation that rewrites action
$\gamma$ so that all registers but one are clear.
A canonical trace $\zeta_{\circ}(t)$ can be obtained from an arbitrary
trace $t$ by replacing all context actions ``$\gamma?$'' by
``$\zeta(\gamma)?$''.
We call this operation trace canonicalization.

As we will see, being able to reproduce arbitrary canonical
traces gives enough distinguishing power to the high-level context.
The reason is that, because they can't trust other compartments,
compiled fully defined components never read values transmitted in
registers with the exception of the one used for communication.
As a consequence, these components cannot distinguish context external actions
based on the content of these unread registers, which are exactly the
ones a compiled fully defined context cleans.
Fully defined programs thus perform the exact same actions when facing
a trace $t$ or its canonicalization $\zeta_{\circ}(t)$,
as formalized by \autoref{lem:canonicalization}.
This means that having the high-level attacker reproduce canonical
traces instead of the original traces of the low-level context will be
enough to lead compiled programs into reproducing the actions they took when
facing the low-level context.

\begin{lemma}[Canonicalization]
\label{lem:canonicalization}
\[
\begin{array}{l}
\forall t,\; s,\; P \psh{}s. \\
~~
P\text{ fully defined wrt. contexts of shape }{\circ}s \Rightarrow \\
~~~~  t \in \ptr{s}{\comp{P}} \iff \zeta_{\circ}(t) \in \ptr{s}{\comp{P}}
\end{array}
\]
\end{lemma}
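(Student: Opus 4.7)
The plan is to prove both directions by induction on the length of the trace, with the inductive step supported by a register-equivalence bisimulation between runs of $\comp{P}$ facing a context that produces $t$ and one that produces $\zeta_{\circ}(t)$. Concretely, I would define a relation $\sim$ on machine states that equates two states whenever they agree on the currently-executing compartment, on $\comp{P}$'s memory and local stack, on the protected call stack, on the program counter, and on the designated communication register, but are allowed to differ arbitrarily on all other registers whenever the context is the one in control. The goal of the proof is to show that this relation is preserved by reduction of $\comp{P}[a]$ for any $a \ash{} s$ producing trace $t$, and correspondingly for a witness $a'$ producing $\zeta_{\circ}(t)$; from this, both programs generate exactly the same sequence of $\comp{P}$-side actions, so their external traces agree up to canonicalization of context actions.

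To build the witnesses $a$ and $a'$ when going from trace membership to executions I would rely on trace extensibility (\autoref{lem:trace-ext}) together with the composition/decomposition lemmas (\autoref{lemma:trace-comp}, \autoref{lemma:trace-decomp}): from $t \in \ptr{s}{\comp{P}}$ there is some reduction producing $t$, and dually for $\zeta_{\circ}(t)$, and we only need to compare $\comp{P}$'s reactions step by step. The base case of the induction (empty trace) is immediate since $\comp{P}$'s initial state is independent of the context. The inductive step splits on whether the next external action is a program action $\gamma!$ or a context action $\gamma?$. Program actions are preserved trivially because $\sim$ keeps all of $\comp{P}$'s own state identical, so it will decide to issue the same $\gamma!$ in both executions.

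The hard case, and the main obstacle, is the context action. Here I need to show that after the opponent plays $\gamma?$ in one execution and $\zeta(\gamma)?$ in the other, the resulting states of $\comp{P}$ are again $\sim$-related, and that $\comp{P}$'s subsequent internal reductions (up to its next external action) behave identically. This is precisely where full definedness of $P$ is used: the compiler emits, at every entry point back into $\comp{P}$ (procedure entry from a cross-compartment \con{Call} and return resumption from \con{Return}), a fixed prologue that reads only the communication register, restores saved state from $\comp{P}$'s own local stack pointer stored in a private memory location, and proceeds with compiled source code whose semantics never mentions machine registers directly. Full definedness rules out the scenarios in which this prologue and the compiled code that follows could deviate: a buffer overflow could otherwise corrupt $\comp{P}$'s stack or code region and cause execution to fall through to instructions that do read the non-communication registers, breaking the invariant.

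With this, I would lift the step-wise argument to the full trace: the finer-grained trace semantics sketched in \autoref{sec:trace-semantics} tracks which compartment currently holds control, so the $\sim$-related executions progress through matching internal and external actions, and their induced external traces coincide on program moves and differ on context moves exactly by $\zeta$. This gives the forward direction $t \in \ptr{s}{\comp{P}} \Rightarrow \zeta_{\circ}(t) \in \ptr{s}{\comp{P}}$; the reverse direction is symmetric, since the same bisimulation arguments run in either direction once the witness contexts are exchanged. I expect the bulk of the machine-checked or paper proof effort to go into specifying the compiler invariant (that every reachable $\comp{P}$-controlled configuration never reads a non-communication register before either terminating or issuing the next external action) and into showing that full definedness of $P$ transports to this invariant on $\comp{P}$, which is essentially the same reasoning already needed to relate blame in the source to register hygiene in the target.
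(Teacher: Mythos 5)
Your core argument is exactly the paper's own rationale for this lemma: compiled fully defined components never read any register other than the communication register before overwriting it, so they cannot distinguish a context action $\gamma?$ from $\zeta(\gamma)?$, and full definedness of $P$ is what guarantees that no undefined behavior (e.g., a buffer overflow into the compiled component's local stack) can derail execution into code that would read a dirty register. The bisimulation-up-to-register-values formulation and the induction on the trace are the right way to make that precise, and your closing remark correctly identifies where the real work lies (transporting source-level full definedness to the target-level register-hygiene invariant).

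Two wrinkles. First, the detour through concrete witness contexts $a$ and $a'$ obtained from \autoref{lemma:trace-decomp} and \autoref{lemma:trace-comp} is both unnecessary and not quite available: trace decomposition is stated only for $\TERM{a[p]}$, so it gives you nothing for a trace $t$ that does not end in $\checkmark$, and the lemma quantifies over \emph{all} traces, including prefixes of diverging interactions. You do not need a context at all --- membership $t \in \ptr{s}{\comp{P}}$ is by definition the existence of a run of the shape-parameterized trace LTS in which context actions are nondeterministic moves, so the induction should be carried out directly on that derivation, replacing each context move $\gamma?$ by $\zeta(\gamma)?$. Second, your relation $\sim$ as stated permits register discrepancies only ``whenever the context is in control,'' but the discrepancy persists after the context hands control back and until $\comp{P}$ overwrites those registers; the invariant you actually need (and state correctly at the end) is that every $\comp{P}$-controlled configuration reachable between two external actions never reads a non-communication register before writing it. Neither issue changes the substance of the argument, but both would have to be repaired in a careful write-up.
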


\ifsooner
\ch{\bf At this point I'm completely lost. The previous couple of
  paragraphs are currently not understandable and need more/better
  intuitive explanations. It's not even clear what you're trying to
  achieve with all this.}\yj{Tried to justify why and it should make
  more sense. Does it?}
\fi

The definability assumption\ifsooner\bcp{preview here why it is called
  an assumption (the explanation below is good, but a forward pointer
  is needed)}\ch{Unsure whether there is a real problem here}\fi{}
below gives a characterization of our 
mapping from a canonical trace $t$ and an action $\gamma_1$
to a compiled fully defined context $\comp{A}$
that reproduces the context actions in $t$ and,
depending on the next action $\gamma$ the
program takes, ends the game with either termination (if
$\zeta(\gamma) = \zeta(\gamma_1)$) or divergence
(if $\zeta(\gamma) \not= \zeta(\gamma_1$)).
The context $\comp{A}$ will thus distinguish a program $p$ producing
trace ``$t.\gamma_1!$'' from any program producing ``$t.\gamma!$'' with
$\zeta(\gamma) \not= \zeta(\gamma_1)$.

\begin{assumption}[Definability]
\label{assumption:definability}
\[
\begin{array}{l}
  \forall t,\, \gamma_1,\, s.~~
  t = \zeta_{\circ}(t) \wedge
  (\exists p \psh{}s.~ (t.\gamma_1!) \in \ptr{s}{p}) \Rightarrow \\
  ~~\exists A \ash{}s.~ A\text{ fully defined wrt. programs of shape }{\bullet}s ~\wedge \\
  \text{~~~~(1) }t \in \atr{s}{\comp{A}} ~\wedge \\
  \text{~~~~(2) }( \gamma_1 \neq \checkmark \Rightarrow
    (t.\gamma_1!.\checkmark?) \in \atr{s}{\comp{A}} ) ~\wedge \\
  \text{~~~~(3) }\forall \gamma.\;\text{if }
  \zeta(\gamma) \not= \zeta(\gamma_1) \text{ then } \forall \gamma'.~
  (t.\gamma!.\gamma'?) \not\in \atr{s}{\comp{A}}
\end{array}
\]
\end{assumption}

The definability assumption gives us a fully defined
context that follows trace $t$ (part 1) and that, if given
control afterwards via action ``$\gamma!$'' such that
$\gamma \not= \checkmark$, acts as follows:
if $\gamma = \gamma_1$ the context terminates (2)
and if the context can distinguish $\gamma$ from $\gamma_1$, it
will make execution diverge by not issuing any action $\gamma'$ (3).
Since it is a compiled fully defined context, $\comp{A}$ can only access
values transmitted using register $r_\ii{com}$, the register that
holds the call argument or return value.
So $\comp{A}$ can only distinguish between $\gamma$ and $\gamma_1$
when they differ in $r_\ii{com}$, which is captured formally by
the $\zeta(\gamma) \neq \zeta(\gamma_1)$ condition.
\ch{What about completely different actions? Say a call and a return,
  or two calls to completely different places?}

\ifsooner\ch{Big gap in explanation when switching to finer-grained
  actions.}\ch{old comment, check later}\fi

Proving this assumption (even on paper) would be quite tedious,
so we settled for testing its correctness using
QuickCheck~\cite{ClaessenH00}.
We built an algorithm (in OCaml) that constructs $A$ out of $t$.
\ch{Confused by the ``More precisely'' below and how
  it relates to the formal assumption above. Are you testing
  the assumption above or something slightly different?}
\ch{It was actually more than just slightly different}
More precisely, the algorithm inputs a trace with internal
actions (the finer-grained trace that erases to $t$)
and builds a context $A$ that reproduces context internal and
external actions as prescribed by that trace.
Execution will afterwards resume at a different point in $A$ depending
on the next action taken by the program.
At each such point, $A$ will either terminate execution or make it
diverge depending on whether the program action is distinguishable
from action $\gamma_1$.
Because the trace taken as input already includes internal actions, we do
not have to reconstruct them, hence our algorithm is not more
difficult to devise than in an open-world setting~\cite{PatrignaniASJCP15}.
In the following, we assume that the algorithm is correct,
\IE that \autoref{assumption:definability} holds.
We can now turn to the main theorem.

\begin{proof}[Detailed proof of structured full abstraction]
Consider a low-level attacker $a \ash{}s$ distinguishing
two fully defined partial programs  $P,Q \psh{}s$ after compilation.
Suppose without loss of generality that $a[\comp{P}]$ terminates and $a[\comp{Q}]$
diverges.
We build a high-level attacker $A \ash{}s$ that is fully defined
with respect to programs of shape $\bullet s$ and can distinguish between
$P$ and $Q$.

We can first apply trace decomposition (\autoref{lemma:trace-decomp})
to $a$ and $\comp{P}$ to get a trace $t_i \in \ptr{s}{\comp{P}}$ that ends
with $\checkmark$, such that $t_i \in \atr{s}{a}$.
Call $t_p$ the longest prefix of $t_i$ such that
$t_p \in \ptr{s}{\comp{Q}}$.
Because trace sets are prefix-closed by construction, we know that
$t_p \in \ptr{s}{\comp{P}} \cap \atr{s}{a}$.

Moreover, $t_p$ is necessarily a \emph{strict} prefix of $t_i$:
otherwise, we could apply trace composition
(\autoref{lemma:trace-comp})
\iffull to $a$ and $\comp{Q}$ \fi
and get that $a[\comp{Q}]$ terminates, a contradiction.
So there exists an
external action $E\alpha$ such that trace ``$t_p.E\alpha$'' is a
prefix of $t_i$.
Now $E\alpha$ cannot be a context action,
or else trace extensibility (\autoref{lem:trace-ext}) would imply that
``$t_p.E\alpha$'' is a trace of $\ptr{s}{\comp{Q}}$, which is incompatible
with $t_p$ being the \emph{longest} prefix of $t_i$ in
$\ptr{s}{\comp{Q}}$.
Therefore, $E\alpha$ is a program action, i.e., there
exists $\gamma_1$ such that ``$E\alpha = \gamma_1!$''.
Intuitively, $\comp{P}$ and $\comp{Q}$ take the same external actions
until the end of $t_p$, where $\comp{P}$ takes external action
``$\gamma_1!$'' and $\comp{Q}$ does not (it takes either a different
action $\gamma \neq \gamma_1$ or no external action at all).

Now, let $t_c$ be the canonicalization of trace $t_p$, i.e.,
$t_c = \zeta_{\circ}(t_p)$.
By canonicalization (\autoref{lem:canonicalization}),
``$t_c.\gamma_1!$'' $= \zeta_{\circ}(t_p.\gamma_1!)$ is a trace of $\comp{P}$.
We can thus use apply definability (\autoref{assumption:definability})
to trace $t_c$ and action $\gamma_1$, using $\comp{P} \psh{}s$ as a
witness having trace ``$t_c.\gamma_1!$''.
This yields a {fully defined} context $A \ash{}s$ such that:
\[
\begin{array}{l}
  \text{(1) }t_c \in \atr{s}{\comp{A}}, \\
  \text{(2) }\gamma_1 \neq \checkmark \Rightarrow
  (t_c.\gamma_1!.\checkmark?) \in \atr{s}{\comp{A}}, \\
  \text{(3) }\forall \gamma,~ \gamma'.~~(t_c.\gamma!.\gamma'?) \in \atr{s}{\comp{A}} \Rightarrow

        \zeta(\gamma) = \zeta(\gamma_1).
\end{array}
\]
We now show that these conditions imply that
\iffull
$\comp{A}$ distinguishes $\comp{P}$ from $\comp{Q}$:
\fi
$\comp{A}\![\comp{P}]$
terminates while $\comp{A}[\comp{Q}]$ diverges.
\ifsooner
\yj{This looks like a good candidate for a lemma to prove separately
so that we reduce the size of this huge proof.}
\ch{If you split this off, you should consider putting it {\em after}
  the theorem. Otherwise it will be impossible(?) to explain
  how on earth you came up with such a lemma statement.
  Otherwise just don't split it, it's fine here.}
\ch{This might already be the case for some previous lemmas.}
\fi

First, we look at $\comp{P}$.
%% has trace $t_c.\gamma_1!$\ch{Are you reminding the reader,
%%   or what's up with this? Proofs should also have flow.}.
Consider the case where
$\gamma_1 = \checkmark$.
In this case, by applying trace extensibility to $\comp{A}$ in (1), we
get that
``$t_c.\checkmark!$'' is a trace of $\comp{A}$, so trace
composition allows us to conclude that $\comp{A}[\comp{P}]$
terminates.
Now if $\gamma_1 \neq \checkmark$ then this action gives back control
to the context, which, given (2), will perform action ``$\checkmark?$''.
Applying trace extensibility to $\comp{P}$, $\comp{P}$ has trace
``$t_c.\gamma_1!.\checkmark?$'', so we can apply trace composition and
deduce that $\comp{A}[\comp{P}]$ terminates in this case as well.

Now, regarding $\comp{Q}$, we first obtain the following by applying
canonicalization to $t_p$, ``$t_p.\checkmark!$'', and
``$t_p.\gamma_1!$'':
\[
\begin{array}{l}
  \text{(a) }t_c = \zeta_{\circ}(t_p) \in \ptr{s}{\comp{Q}}, \\
  \text{(b) }(t_c.\checkmark!) = \zeta_{\circ}(t_p.\checkmark!) \in \ptr{s}{\comp{Q}} \Rightarrow
  (t_p.\checkmark!) \in \ptr{s}{\comp{Q}}, \\
  \text{(c) }(t_c.\gamma_1!) = \zeta_{\circ}(t_p.\gamma_1!) \in \ptr{s}{\comp{Q}} \Rightarrow
  (t_p.\gamma_1!) \in \ptr{s}{\comp{Q}}.
\end{array}
\]

After following trace $t_c$, which $\comp{Q}$ has from (a), $\comp{Q}$
cannot perform a terminating action:
otherwise using (b) and trace extensibility for $a$ and $t_p$, we
could apply trace composition to trace ``$t_p.\checkmark$'' and get that
$a[\comp{Q}]$ terminates, which is a contradiction.
$\comp{Q}$ cannot perform action $\gamma_1$ either, since (c) would
then violate the fact that $t_p$ is the longest prefix of $t_i$ in
$\ptr{s}{\comp{Q}}$.
So $\comp{Q}$ only has two options left.
The first is to perform no external action by
going into an infinite sequence of internal transitions.
In this case, using (1), we can apply trace composition to get that
$\comp{A}[\comp{Q}]$ diverges.
The second option is to give control back to the context using
an external action $\gamma$ so that
$\checkmark \neq \gamma \neq \gamma_1$.
Because fully defined compiled programs clean registers, they only
yield canonical actions, i.e.
$\gamma = \zeta(\gamma) \wedge \gamma_1 = \zeta(\gamma_1)$.
Combined with (3), this entails that if $\comp{A}$ produced an
action $\gamma'$, we would have $\gamma = \gamma_1$, which is
false.
Hence, $\comp{A}$ doesn't produce any action: it goes into an infinite
sequence of local transitions.
We can again apply trace composition to get that
$\comp{A}[\comp{Q}]$ diverges.

We finally apply separate compiler correctness
(\autoref{cor:separate-compiler-correctness}) to conclude the proof.
\end{proof}

\section{Related Work}
\label{sec:related}

\iffull
\ch{Any more related work we're missing here?}
\fi

\SUBSECTION{Fully abstract compilation}

Fully abstract compilation was introduced in the seminal work of
Mart\'in Abadi~\cite{abadi_protection98}
%
% \ns{I think the first references are in papers by Plotkin and
%   separately Milner, though not so much for compilation.}
%
and later investigated by the academic community.
(Much before \iffull securing compilers\else this\fi,
 the concept of full abstraction
 was coined by Milner~\cite{Milner75}.)
\iffull
Abadi~\cite{abadi_protection98} and later Kennedy~\cite{Kennedy06}
identified failures of full abstraction in the Java and C\# compilers,
some of which were fixed, but also, some that would be too expensive to
fix with the currently deployed protection mechanisms.
%
% \ns{The lack of secure compilers in practice is not just because of
%   efficiency, is it? Without something like micro-policies and a
%   tag-based architecture, you just can't get full abstraction between
%   ML and C/assembly. There are several negative results around full
%   abstraction~\cite{Mitchell93, abadi_protection98,Kennedy06})
%   ... which you might also talk about and say how with micro-policies,
%   you can do better.}
% %
% \ch{Abadi identified a failure of full abstraction in the translation
%   from Java to JVM bytecode.}
% %
% \ch{Kennedy~\cite{Kennedy06} identified such failures in the
%   translation from C\# and .NET, and while some of them were fixed
%   some others would have been too costly to fix. Not impossible
%   to fix, right?}
% %
\fi
\iffull\else For instance, \fi
Ahmed~\ETAL\cite{AhmedB08,AhmedB11, Ahmed15, NewBA16} proved the
full abstraction of type-preserving compiler passes for functional
languages and devised proof techniques for {\em typed} target languages.
Abadi and Plotkin~\cite{abadi_aslr12} and
Jagadeesan~\ETAL\cite{JagadeesanPRR11} expressed the protection
provided by a mitigation technique called address space layout
randomization as a probabilistic variant of full abstraction.
Fournet~\ETAL\cite{FournetSCDSL13} devised a fully abstract compiler
from a subset of ML to JavaScript.

\ch{Cedric Fournet asked Marco at CSF about the ``modularity''
  property, and whether it's equivalent to (basically) our separate
  compilation, but it seemed that he already had work on ``separate
  compilation'' ... maybe in the process calculus setting. Can we dig
  that up and give him proper credit if needed?}

Patrignani~\ETAL\cite{PatrignaniASJCP15, LarmuseauPC15}
were recently the first to study fully abstract compilation to machine code,
starting from single modules written in simple, idealized
object-oriented and functional languages and targeting hardware
architectures featuring a new coarse-grained isolation mechanism.
% \ch{Including the technique used in this paper, right?}
% \yj{I'm not sure what \cite{PatrignaniC15} does but my technique is
% originally from \cite{JeffreyR05b} and my biggest inspiration
% is \cite{JeffreyR05} which reuses the same technique.
% \cite{PatrignaniASJCP15} shows how to build a high-level attacker
% (without proving their algorithm correct). So they explain partially
% how to ``prove'' (proof by example) something that was (or seemed to
% be) taken for granted in \cite{JeffreyR05}.
% It was of course helpful to have the Patrignani papers at
% hand since we're targetting machine code, but except
% for \cite{DevriesePP16} I don't think we can say that they proposed
% the proof technique, they were rather adapting it to a setting of
% machine code.}
%
Until recently, they studied fully abstract compilers that
by design violate our separate compilation property, so they cannot be
applied to our compartmentalizing compilation setting.

In recent parallel work, Patrignani~\ETAL\cite{PatrignaniDP16}
proposed an extension of their compilation scheme to protecting
multiple components from each other.
The attacker model they consider is different, especially since their
source language does not have undefined behavior.
Still, if their source language were extended with unsafe features,
our SCC property could possibly hold for their compiler.

Patrignani, Devriese~\ETAL\cite{PatrignaniC15, DevriesePP16} also
proposed proof techniques for full abstraction that work for untyped
target languages, and more recently proved full abstraction by
approximate back-translation for a compiler between the simply typed
and the untyped lambda calculus and fully formalized this proof in
Coq~\cite{DevriesePPK17}. As opposed to our Coq proof for the instance
from \autoref{sec:instance}, which is done under well-specified
assumptions but not complete, the proof of Devriese~\ETAL is assumption free.

\ch{In this work we tease out the unsafe source language problem and
  beat it to death. We could extend the compartmentalizing compilation
  story to also encompass secure interoperability with mutually
  distrustful low-level components, as we already did informally in
  \cite[\S 2.2 and \S 2.3]{JuglaretHAPST15}, and as they do in
  \cite{PatrignaniDP16}. The key point there is that low-level
  components can also get protection, but only if they are
  observationally equivalent to compiled components, which is a strong
  restriction. The most realistic scenario we could find where this
  could be useful was hand-optimizing assembly produced by our
  compiler \cite[\S 2.3]{JuglaretHAPST15}, but \cite{PatrignaniDP16}
  came up with more scenarios where ``modularity'' is supposed to
  help, although these scenarios often don't have much to do with
  security (\EG separate compilation is a good feature to have, even
  if things are compiled one might not have access to the source,
  independent attestation for PMA). If one doesn't find these use
  cases for mutual distrust convincing with low-level components (\EG
  because increasing overhead without clear security benefits is a
  hard sell), one might be tempted to just say: use full abstraction
  there and merge all low-level components into one. Anyway, could all
  this be worth discussing in this paper?}

\ch{A different reason to include low-level components is horizontal
  composition for secure compilers. That's a challenging topic;
  see comment in by ERC b2 after discussion with Yannis.}

\SUBSECTION{Formal reasoning about compartmentalized code}

% \aaa{The title of this subsection seems a little bit too defensive. I
%   think that if we manage to make clear from the start that we are
%   including the compiler in the picture, then we could change it to
%   something like ``''.}
% \bcp{+1}

SCC is orthogonal to formal techniques for reasoning about
compartmentalized software: SCC allows {\em transferring} security
guarantees for compartmentalized code written in a source language
to machine code via compartmentalizing compilation, but SCC itself
does not provide effective reasoning principles to obtain those
security guarantees in the first place.
The literature contains interesting work on formally characterizing
the security benefits of compartmentalization.
Promising approaches include Jia~\ETAL's work on System
M~\cite{JiaS0D15}, and Devriese~\ETAL's work on logical relations for
a core calculus based on JavaScript~\cite{DevriesePB16}, both of which
allow bounding the behavior of a program fragment based on the
interface or capabilities it has access to.
One significant challenge we attack in this paper is languages with
undefined behaviors, while in these other works illegal actions such
as accessing a buffer out of bounds must be detected and make the program
halt.

\SUBSECTION{Verifying correct low-level compartmentalization}

Recent work focused on formally verifying the correctness of
low-level compartmentalization
mechanisms based on software fault isolation~\cite{ZhaoLSR11,
  MorrisettTTTG12, KrollSA14} or tagged hardware~\cite{micropolicies2015}.
That work, however, only considers the correctness of the low-level
compartmentalization mechanism, not the compiler and not high-level
security properties and reasoning principles for code written in
a programming language with components.
\ifsooner
\ch{TODO Should explain why for us having the compiler in the picture is
  crucial (and in some ways better!),
  while for people doing SFI it's often the opposite, they
  don't want the compiler be part of their TCB~\cite{ZengTE13}
  (unless it's verified~\cite{KrollSA14}).}
\fi
Communication between low-level compartments is generally done by jumping to a
specified set of entry points, while the model we consider in
\autoref{sec:instance} is more structured and enforces correct calls
and returns.

Finally, seL4 is a verified operating system
microkernel~\cite{Klein09sel4:formal}, that uses a capability system
to separate user level threads and for which correct access
control~\cite{SewellWGMAK11} and noninterference
properties~\cite{seL4:Oakland2013} were proved formally.

\iffull
\feedback{
Some related work that may be of interest to the authors, these are closer to
the problem of compartmentalization than full abstraction.
- Flume OS: Information Flow Control for Standard OS Abstractions. SOSP 07
- Shill: A Secure Shell Scripting Language. OSDI 2015}
\ch{Only if it's actually related. Not really sure.}
\ch{Flume is about coarse grained IFC between OS processes (=components).
  Not related to compartmentalizing compilation as far as I can tell.}
\ch{Shill is a scripting language for achieving least privilege using
  capability-based sandboxing. It has nothing to do with low-level
  attacks and doesn't involve a compiler, so not that related.}
\fi

\section{Conclusion and Future Work}
\label{sec:conclusion}

We have introduced a new secure compartmentalizing compilation property,
related it to the established notion of full abstraction,
and applied our property in a carefully simplified setting: a small
imperative language with procedures compiling to a compartmentalized
abstract machine.
This lays the formal foundations for studying the secure compilation
of mutually distrustful components written in unsafe languages.

In the future we plan to build on this groundwork to study more realistic
source and target languages, compilers, and enforcement mechanisms.
In the long run, we would like to apply this to the C language by
devising a secure compartmentalizing variant of CompCert that targets a
tag-based reference monitor~\cite{micropolicies2015} running on a real
RISC processor~\cite{Dover16}.
We have in fact started working towards this long term
goal~\cite{JuglaretHAPST15}, but this will take time to achieve.
Beyond tagged hardware, we would also like to implement the abstract
compartmentalization machine from \autoref{sec:instance} in terms of
various enforcement other mechanisms, including: process-level
sandboxing~\cite{Kilpatrick03, GudkaWACDLMNR15, wedge_nsdi2008,
  ProvosFH03}, software fault isolation (SFI)~\cite{YeeSDCMOONF10},
capability machines~\cite{cheri_oakland2015}, and multi-PMA
systems~\cite{PatrignaniDP16}.
As we target lower-level machines, new problems will appear: for
instance we need to deal with the fact that memory is finite and
resource exhaustion errors cannot be hidden from the attacker, which
will require slightly weakening the security property.
%
% trying to hide components' size and resource consumption will
% likely be too costly, and we will need to weaken our property to allow
% controlled release of such information to the attacker.'' ---
%   we can already do this with shapes though, can we find a better
%   example? Yannis?}
%
Finally, we would like to study more interesting compartmentalization
models including dynamic component creation and nested components, and
the way these extensions influence the security property.
\ifsooner
\ch{Also more fine-grained notions of components. For instance a C
  program could be protected from a piece of inline ASM that it uses.
  Or ML components could exchange higher-order values.}
\fi

% \ch{The micro-policies instance is particularly interesting to us
%   but that might deanonymize us. Could increase anonymity set
%   by saying that at least two efforts could benefit from this
%   \cite{micropolicies2015, PatrignaniDP16}}

%\ch{When extending it to C do we really want to treat all
%   undefined behaviors in C as compromise or can we be more granular?}

% \ch{Some text snapped from intro, merge it in}
% %
% We hope that our property
% will be a useful step toward formally assessing the security of compilers
% targeting existing compartmentalization mechanisms~\cite{Kilpatrick03,
%   GudkaWACDLMNR15, ProvosFH03, ReisG09, YeeSDCMOONF10,
%   micropolicies2015,sgx,PatrignaniDP16}.
% %
% However, this involves significant work because of the complexity of all the
% involved artifacts (\EG efficient compilers for C or C++ targeting a
% realistic hardware platform).
% %
% Our simple secure-compartmentalization proof thus constitutes a first step
% in this direction and a sanity check for our property.

\ifanon\else
\paragraph*{Acknowledgments}
We thank
% are grateful to
Andr\'e DeHon,
Deepak Garg, and
Andrew Tolmach
for helpful discussions and thoughtful feedback on earlier drafts.
Yannis Juglaret is supported by a PhD grant from the
French Department of Defense (DGA) and Inria.
Arthur Azevedo de Amorim and Benjamin C. Pierce are
supported by
\href{http://www.nsf.gov/awardsearch/showAward?AWD_ID=1513854}{NSF award 1513854,
{\em Micro-Policies: A Framework for Tag-Based Security Monitors}}.

%\ch{Also want to add DeepSpec here?}\bcp{No need, for this paper.}
\fi

%\narrowstyle
\bibliographystyle{plainurl}
\bibliography{mp,safe}
%\normalstyle

\iffull
\appendix
\subsection{Full abstraction with all side conditions}
\label{app:fa-detail}
% Note: This has to be a subsection otherwise it doesn't work
% ./main.tex:1761-1761: ** WARNING: Ignoring useless \section in Appendix

\begin{defn}[Full abstraction for unsafe languages, in more detail]
\label{defn:lfa-detail}~\
  We call a compiler $\downarrow$ for an unsafe language {\em fully
    abstract} if for all {\em fully defined} partial programs $P \eqhstat Q$
\[
\begin{array}{l}
\left(\begin{array}{rl}
\forall A.& \ct{A}{P} \land \cl{A[P]} \,\land\\
          &A~\textit{fully defined} \qquad\quad \Rightarrow A[P] \eqh A[Q]
\end{array}\right) \iff\\
\;(\forall a.~ \ct{a}{\comp{P}} \land\, \cl{a[\comp{P}]} \Rightarrow
  a[\comp{P}] \eql a[\comp{Q}])\\
\end{array}
\]
\end{defn}

Note that we have added an additional pre-condition requiring that $P$
and $Q$ are statically equivalent in the source language ($P \eqhstat Q$).
In a statically typed language this could for instance ensure that $P$
and $Q$ have the same type.
We have also added side-conditions requiring that the high- and
low-level contexts are {\em compatible} with the partial programs that
is plugged into them.
In a statically typed language $\ct{A}{P}$ could for instance ensure
that the type program $P$ provides is compatible with the type context
$A$ expects for its hole.
Finally, we require that the two contexts close the original partial
programs and produce a {\em complete} program that can be executed:
for instance ensuring that all used variables, procedures, or
components are defined either by the program or by the context.
While the compatibility and completeness side-conditions are
asymmetric, an additional property we require from $P \eqhstat Q$
restores symmetry at the high-level:
\[
\begin{array}{rl}
P \eqhstat Q \land \ct{A}{P} \land \cl{A[P]} & \Rightarrow\\
                  \ct{A}{Q} \land \cl{A[Q]} &
\end{array}
\]
Moreover, we require that high-level static equivalence $P \eqhstat Q$
implies low-level static equivalence $P \eqlstat Q$, which then has a
similar property:
\[
\begin{array}{rl}
p \eqlstat q \land \ct{a}{p} \land \cl{a[p]} & \Rightarrow\\
                  \ct{a}{q} \land \cl{a[q]} &
\end{array}
\]

\ch{Note that the $P \eqhstat Q$ implies $P \eqlstat Q$ doesn't seem
  to be stated at all in the Coq development, and it should! Probably
  part of the structured\_full\_abstraction definition?}

\begin{defn}[Structured full abstraction, in more detail]\label{defn:sfa-detail}~\\
  We say that a compiler $\downarrow$ for an unsafe language
  satisfies {\em structured full abstraction} if for all
  {\em program shapes} $s$ and partial programs
    $P \hasshape{\bullet} s$ and $Q \hasshape{\bullet} s$ so that
    $P \eqhstat Q$ and
    $P$ and $Q$ are {\em fully defined} with respect to contexts of shape ${\circ}s$
\[
\begin{array}{l}
\left(\begin{array}{r}
\forall A \hasshape{\circ} s.\qquad \ct{A}{P} \land \cl{A[P]} \land\\
   A~\textit{fully defined}\text{ wrt. programs of shape }{\bullet}s\\
   \qquad\Rightarrow A[P] \eqh A[Q]
\end{array}\right)\iff\\[1.5em]
(\forall a \hasshape{\circ} s.~ \ct{a}{\comp{P}} \land \cl{a[\comp{P}]} \Rightarrow
  a[\comp{P}] \eql a[\comp{Q}])\\
\end{array}
\]
\end{defn}

\yj{Isn't a shape preservation assumption about the compiler missing here?
  $\forall s,P.~ P \hasshape{\bullet} s \Rightarrow \comp{P}
  \hasshape{\bullet} s$}

\yj{Could we simplify the property by putting everything in the shape?
  Or why would it be bad?
  \IE require that
  \[
  \begin{array}{l}
  \text{(1) }\forall s,P,Q.~ P \hasshape{\bullet} s \wedge Q \hasshape{\bullet} s \Rightarrow P \eqhstat Q \\
  \text{(2) }\forall s,P,A.~ P \hasshape{\bullet} s \wedge A \hasshape{\circ}
  s \Rightarrow \ct{A}{P} \wedge \cl{A[P]} \\
  \text{(a) }\forall s,p,q.~ p \hasshape{\bullet} s \wedge q \hasshape{\bullet} s \Rightarrow p \eqlstat q \\
  \text{(b) }\forall s,p,a.~ p \hasshape{\bullet} s \wedge a \hasshape{\circ}
  s \Rightarrow \ct{a}{p} \wedge \cl{a[p]}
  \end{array}
  \]
  At least it seems to hold in my instance from \texttt{while.org}.
  (It might be that we already had this discussion and I don't
  remember the conclusion.)}
\ch{Don't remember either, but what I wrote down now corresponds
  quite directly to the Coq formalization, right?}

\fi

\end{document}